\title{Simulation Limitations of Affine Cellular Automata}
\patchcmd\maketitle{\def\@makefnmark{\rlap{\@textsuperscript{\normalfont\@thefnmark}}}}{}{}{}
\def\thanksAAffil#1{% <--- These %'s are necessary for spacing
  \footnotemarkAAffil\protected@xdef\@thanks{\@thanks%
        \protect\footnotetextAAffil[\the \c@footnoteAAffil]{#1}}%
}
\def\thanksANote#1{%
  \footnotemarkANote%
  \protected@xdef\@thanks{\@thanks%
        \protect\footnotetextANote[\the \c@footnoteANote]{#1}}%
}
\providecommand{\keywords}[1]
{
  \small	
  \textbf{\textit{Keywords: }} #1
  \normalsize
}
\author{% 
  Barbora Hudcová%
  \thanksAAffil{Charles University, Faculty of Mathematics and Physics, Department of Algebra, Sokolovská~83, 18600 Prague}$^{,}$\thanksAAffil{CTU, Czech Institute of Informatics, Robotics and Cybernetics, Jugoslávských partyzánů~3, 160 00 Prague}$^{,}$%
  \thanksANote{Corresponding author; email: bara.hudcova@gmail.com, Charles University, Faculty of Mathematics and Physics, Department of Algebra, Sokolovská~83, 18600 Prague, Czech Republic}
  , %
  Jakub Krásenský%
  \footnotemarkAAffil[1]$^{,}$\thanksAAffil{Czech Technical University in Prague, Faculty of Information Technology, Thákurova~9, 160 00 Prague}%
}
\date{}
\begin{document}

\maketitle

\begin{abstract}
   Cellular automata are a famous model of computation, yet it is still a challenging task to assess the computational capacity of a given automaton; especially when it comes to showing negative results. In this paper, we focus on studying this problem via the notion of CA intrinsic simulation. We say that automaton $\A$ is simulated by $\B$ if each space-time diagram of $\A$ can be, after suitable transformations, reproduced by $\B$.
   
   We study affine automata -- i.e., automata whose local rules are affine mappings of vector spaces. This broad class contains the well-studied cases of linear automata. The main result of this paper shows that (almost) every automaton affine over a finite field $\F_p$ can only simulate affine automata over $\F_p$. We discuss how this general result implies, and widely surpasses, limitations of \chng{linear and} additive automata previously proved in the literature. 
   
   We provide a formalization of the simulation notions into algebraic language and discuss how this opens a new path to showing negative results about the computational power of cellular automata using deeper algebraic theorems.
\end{abstract}

\keywords{cellular automata, simulation capacity, affine cellular automata, grouping}

\section{Introduction}

Cellular Automata (CAs) are a famous model of computation which demonstrates that very simple local rules can produce complex behaviour. In general, they have a great potential to solve challenging tasks efficiently due to their massively parallel nature. However, it is a long-standing challenge to formally assess the computational capacity of a given CA.

The usual formal method of demonstrating a CA's computational power is to prove its Turing completeness. Typically, this is done by taking a classical universal model of computation (Turing machine, tag system, etc.) and embedding its computations into the space-time diagrams of the given CA. These embeddings serve as impressive demonstrations that cellular automata are indeed complex systems. Moreover, the embeddings serve as powerful tools for constructing very compact computationally universal systems. However, there are several drawbacks to this approach.
\begin{enumerate}
    \item Cellular automata are highly parallel systems, thus, it is ineffective to simulate a sequential Turing machine in its space-time diagrams to perform computations \cite{universalities_in_cas}.
    \item Given a CA, showing that it is Turing complete is very laborious. Most proofs of this sort rely on noticing localized structures in the CAs' space-time diagrams whose interactions are typically encoded as basic logic gates \cite{universality_in_eca, turing_universality_of_gol}.
    \item Despite several attempts \cite{universality_and_cellular_automata, decidability_and_universality_in_symbolic_dynamical_systems, the_game_of_life_universality_revisited}, there is no single generally accepted formal definition of simulation so far \cite{universalities_in_cas}. Thus, it is extremely difficult to prove convincing negative results. 
\end{enumerate}

A different approach to formally assessing a CA's computational capacity is through the notion of \emph{\chng{CA intrinsic simulation}}. Informally, we say that CA $\A$ is simulated by $\B$ if each space-time diagram of $\A$ can be, after suitable transformations, reproduced by $\B$. We argue that comparing two cellular automata is much more natural than comparing a CA with a Turing machine, since in the latter case, the architectures of the systems differ substantially. Past works have explored various notions of CA simulations, typically focusing on positive results: for a fixed family of CAs and a fixed CA simulation definition, authors construct \emph{intrinsically universal CAs}; i.e., cellular automata that are able to simulate any other CA within the fixed family \cite{simple_universal_ca, intrinsic_universality_of_reversible_1D_ca, reversible_spacetime_simulation_of_cas, the_quest_for_small_universal_cas, the_game_of_life_universality_revisited, intrinsic_universality_problem_of_1d_cas, communication_complexity_and_intrinsic_universality_in_ca}.

In contrast, the complementary work focuses on the negative results. For various notions of CA simulations, the goal is to show that particular types of automata are limited in terms of what they can simulate. Such results are scarce, yet they bring a valuable insight into the structure of the CA space imposed by the simulation relation. For certain CA simulation definitions, negative results have been shown about various classes of CAs such as nilpotent CAs or particular additive automata \chng{\cite{inducing_order_on_ca_by_grouping, additive_cas_over_Zp_bottom, bulking2}}. \chng{In \cite{communication_complexity_and_intrinsic_universality_in_ca} the authors show that certain CAs cannot be intrinsically universal using an interesting approach based on the notion of communication complexity.}

Generally, each CA intrinsic simulation definition considers a certain class of CA transformations $\mathrm{T}$ that map each automaton $\B$ into a class of related automata $\mathrm{T}(\B)$. Then, we say that $\A$ can be simulated by $\B$ if $\A \in \mathrm{T}(\B)$. We propose an informal classification of the previously studied transformations into:
\begin{itemize}
    \item algebraic: transformations of the CA's local rule; e.g.: products, sub-automata, quotients, iterations
    \item geometric: transformations of the CA's grid structure; e.g.: tiling of the grid space and grouping of multiple cells, shifts, reflections
\end{itemize}

In this work, we propose a definition of CA simulation that is, to the best of our knowledge, the most general algebraic one so far. We define the CA simulation and study its properties in Section \ref{section:defining_simulation}. The importance of this section lies in formalizing all the CA notions in abstract algebraic language. This will allow us to see connections to well-established algebraic fields that can provide powerful tools for analysing the CA simulation capacities. 

In Section \ref{section:affine_introduction}, we introduce the class of affine CAs -- automata whose local rules are affine mappings of vector spaces. This general class contains the much studied additive \chng{and linear} CAs \cite{algebraic_properties_of_ca, exact_results_for_determinstic_cas, linear_cas_and_recurring_sequences}. Therein, we present Theorem \ref{thm:main_result} -- the main result of this paper, paraphrased below:
\begin{blank}
  Let $\B$ be a cellular automaton affine over a finite field $\F_p$. If $\B$ satisfies a certain mild condition (``the outer components of its local rule are bijective''), then all CAs simulated by $\B$ are also affine over $\F_p$ (and satisfy the same condition).  
\end{blank}

\noindent Thus, we show that affine CAs have very limited computational capacity. We note that the mild condition does not impose any important limitations as it is satisfied by almost all previously studied, non-trivial cases of affine CAs. We also note that our main result already implies the negative results about additive CAs proved previously in the literature \cite{additive_cas_over_Zp_bottom, bulking2}. \chng{Further, we discuss that the proofs we present extend to other general automata classes such as the abelian CAs.} Even when applied to some of the best known additive automata, our results give considerably more than was known before. To give one concrete example, it yields that ECA 60 \cite{structure_of_eca_rule_space} cannot simulate any other elementary cellular automaton (up to isomorphism); this is discussed in Example \ref{eca_60}.

Section \ref{section:affine_limitations} contains elementary proofs of the results leading up to Theorem \ref{thm:main_result}. \chng{In Subsection \ref{subsection:generalizations}, we discuss how our proofs generalize and yield negative results about the simulation capacities of other CA classes, such as the abelian automata (and their affine counterparts).}
In Section \ref{section:conclusion}, we point towards a connection to deeper results from universal algebra. We believe that further study of connections between general algebra and CAs could be fruitful and provide insightful results about the computational limitations of various cellular automata classes.

\section{Defining Simulation of Cellular Automata} \label{section:defining_simulation}
\begin{defn}[Cellular automaton]
Let $S$ be a finite set, $r \in \N$, and $f: S^{2r+1} \rightarrow S$ a function. Let $S^\Z$ denote the space of all bi-infinite sequences $c =\cdots c_{-1} c_0 c_1 \cdots$, $c_i \in S$ for each $i \in \Z$. A \emph{one-dimensional cellular automaton (CA) with set of states $S$, radius $r$, and local rule $f$} is a dynamical system $\mathcal{A} = (S^\Z, F)$ where $F: S^\Z \rightarrow S^\Z$ is defined for any $c \in S^\Z$ and any position $i \in \Z$ as:
\begin{equation} \label{def:global_F}
    F(c)_i = f(c_{i-r}, \ldots, c_{i-1}, c_i, c_{i+1}, \ldots, c_{i+r}).
\end{equation}
We call $S^\Z$ the \emph{configuration space of $\A$} and $F$ its \emph{global rule}. We also call $(S^\Z, F)$ the \emph{global algebra}. Each 1D CA is determined by its \emph{local algebra} $\AA = (S, f)$. Sometimes, we also write $\AA = (S, f)_r$ to highlight the CA's radius. A \emph{space-time diagram} of $\A$ with initial configuration $c \in S^\Z$ and $t$ time-steps is a matrix whose rows are exactly $c, F(c), \ldots, F^t(c)$. Visualizations of CA space-time diagrams give valuable insights into the CA's dynamics. When depicting a space-time diagram, we only show a finite part of each row.
\end{defn}

In this paper, we focus on 1D cellular automata. Thus, whenever we talk about a CA, we implicitly mean a 1-dimensional one. The definitions as well as most of the results of this paper can be generalized to higher dimensions quite straightforwardly, but the technical details would be tedious. Before we proceed with the definitions of CA simulation, we briefly review some simple algebraic concepts.

\begin{mdframed}
Let $S$ be a set, $k \in \N$, and $f: S^k \rightarrow S$ a function. We call the tuple $\AA=(S, f)$ \emph{an algebra of type $k$}. Let $\BB=(T, g)$ be an algebra also of type $k$. We say that:
\begin{itemize}
    \item $\AA$ is \emph{isomorphic} to $\BB$ if there exists a bijection $\varphi: S \rightarrow T$ such that for each $s_1, \ldots, s_k \in S$: $$\varphi\big(f(s_1, \ldots, s_k)\big) = g\big(\varphi(s_1), \ldots, \varphi(s_k)\big).$$ We write $\AA \cong \BB$.
    \item $\AA$ is a \emph{subalgebra} of $\BB$ if $S \subseteq T$ and $\restr{g}{S^k} = f$.
    \item Let $\sim \, \subseteq T \times T$ be an equivalence relation. We call $\sim$ a \emph{congruence} on $\BB$ if:
    $$g(t_1, \ldots, t_k) \sim g(t'_1, \ldots, t'_k) \text{ whenever } t_1 \sim t'_1, \ldots, t_k \sim t'_k \text{ for any } t_1, \ldots, t_k, t'_1, \ldots, t'_k \in T.$$ 
    We denote by $[t]_\sim = \{t' \in T \mid t' \sim t \}$ the \emph{equivalence class} of $t \in T$. There is a well-defined algebra $\bigslant{\BB}{\sim} = (U, h)$ where $U = \{[t]_\sim \mid t \in T\}$ and $h: U^k \rightarrow U$ is defined as $h([t_1]_\sim, \ldots, [t_k]_\sim) \coloneqq [g(t_1, \ldots, t_k)]_\sim$. We say that $\AA$ is a \emph{quotient algebra} of $\BB$ if $\AA = \bigslant{\BB}{\sim}$ for some congruence $\sim$ on $\BB$.
    \item Let $\BB_1=(T_1, g_1), \ldots, \BB_n=(T_n, g_n)$ all be algebras of type $k$. We define their \emph{product} $\BB_1 \times \cdots \times \BB_n$ as the algebra $(T_1 \times \cdots \times T_n, h)$ where $h(\uu_1, \ldots, \uu_k)_i = g_i(u_1^i, \ldots, u_k^i)$ for each $1 \leq i \leq k$, $\uu_i \in T_1 \times \cdots \times T_n$, $\uu_i = u_i^1 \cdots u_i^n$.
\end{itemize}
\end{mdframed}

\subsection{CA Canonical Relations}
We start by defining canonical relations between CAs. We first do this in a purely algebraic manner and subsequently motivate the definitions by relating them to the CAs' global dynamics.

\begin{defn}[CA canonical relations] \label{def:basic_local_simulations}
Let $\A = (S^\Z, F)$ and $\B = (T^\Z, G)$ be CAs with local algebras $\AA=(S, f)_r$ and $\BB=(T, g)_r$ respectively. We say that:
\begin{itemize}
    \item \emph{$\A$ is automaton-isomorphic to $\B$} if $\AA$ is isomorphic to $\BB$.
    \item \emph{$\A$ is a sub-automaton of $\B$} if $\AA$ is a subalgebra of $\BB$.
    \item \emph{$\A$ is a quotient automaton of $\B$} if $\AA$ is a quotient algebra of $\BB$. 
    \item Let $\A_1, \ldots, \A_k$ be CAs with local algebras $\AA_1=(S_1, f_1)_r, \ldots, \AA_k=(S_k, f_k)_r$ respectively. We define their \emph{product} $\A_1 \times \cdots \times \A_k$ to be the CA given by the local algebra $\AA_1 \times \cdots \times \AA_k$.
\end{itemize}
It is natural to consider automata up to isomorphism. This gives rise to the natural algebraic operators we define below for a class $\KK$ consisting of local algebras of some family of cellular automata with radius $r \in \N$.
\begin{align*}
    \SSS(\KK)&=\{\AA \mid \exists \BB \in \KK \text{ and } \CC \text{ subalgebra of } \BB \text{ such that } \AA \cong \CC\}\\
    \HHH(\KK) &= \{\AA \mid \exists \BB \in \KK \text{ and } \CC \text{ quotient algebra of } \BB \text{ such that } \AA \cong \CC\}\\
    \Pfin(\KK) &= \{\AA \mid \exists \BB_1, \ldots, \BB_k \in \KK \text{ such that } \AA \cong \BB_1 \times \cdots \times \BB_k\}.
\end{align*}
The operators $\SSS, \HHH$, and $\Pfin$ transform a single local algebra $\AA$ into classes of algebras $\SSS(\AA)$, $\HHH(\AA)$, and $\Pfin(\AA)$. Subsequently, we will compose such operators and therefore, we generally define them on a class of local algebras $\KK$ rather than on a single local algebra $\AA$.
\end{defn}

The CA canonical relations of Definition \ref{def:basic_local_simulations} relate the local algebras of cellular automata. As such, they do not explicitly describe how this relationship translates to the CAs' global dynamics. The crucial observation is the following: Suppose that $\A$ is a CA isomorphic to $\B$, is a sub-automaton of $\B$ or is a quotient automaton of $\B$. Then, for any space-time diagram $\mathbf{c}$ of $\A$, there exists a space-time diagram $\mathbf{d}$ of $\B$ which can be ``translated'' to $\mathbf{c}$ via a very simple mapping. Thus, $\B$ can effectively reproduce any dynamics of $\A$. Before we formalize this observation, we introduce some further terminology.

Let $\varphi: S \rightarrow T$ be a mapping between finite sets. We define its \emph{canonical extension} $\overline{\varphi}: S^\Z \rightarrow T^\Z$ simply as $\overline{\varphi}(c)_i = \varphi(c_i)$ for each $c \in S^\Z$ and each $i \in \Z$.

\begin{obs} \label{obs:local_global_relation}
    Let $\A = (S^\Z, F)$ and $\B = (T^\Z, G)$ be CAs with local algebras $\AA=(S, f)_r$ and $\BB=(T, g)_r$ respectively. Then, it holds that:
    \begin{enumerate}
        \item \label{obs:iso} $\AA \cong \BB$ if and only if there exists a bijection $\varphi: S \rightarrow T$ such that $\ophi \circ F = G \circ \ophi$.
        \item \label{obs:sub} $\AA \in \SSS(\BB)$ if and only if there exists an injective mapping $\iota: S \rightarrow T$ such that $\oi \circ F = G \circ \oi$.
        \item \label{obs:quo} $\AA \in \HHH(\BB)$ if and only if there exists a surjective mapping $\pi: T \rightarrow S$ such that $F \circ \opi = \opi \circ G$.
    \end{enumerate}
\end{obs}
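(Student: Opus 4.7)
The three statements have a uniform flavour: a local-algebra relation between $\AA$ and $\BB$ is equivalent to a global-algebra relation between $F$ and $G$ that is built pointwise from some map on states. My plan is to prove each direction by a single observation: if $\varphi$ is any map $S \to T$ (or $T \to S$), then the claim ``$\overline{\varphi}$ intertwines $F$ and $G$'' is equivalent to ``$\varphi$ intertwines $f$ and $g$ on every $(2r+1)$-tuple that actually appears''. Combined with an appropriate choice of a configuration that realises a prescribed tuple in a fixed window, this will immediately give both implications of each part.

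For the forward direction of \eqref{obs:iso}, given an isomorphism $\varphi$ I would compute $(\overline{\varphi} \circ F)(c)_i$ from \eqref{def:global_F}, push $\varphi$ inside using $\varphi \circ f = g \circ (\varphi, \dots, \varphi)$, and recognise the result as $(G \circ \overline{\varphi})(c)_i$. For the converse, given a bijection $\varphi$ with $\overline{\varphi}\circ F = G\circ \overline{\varphi}$, and any $s_1,\dots,s_{2r+1} \in S$, I would pick an arbitrary configuration $c$ whose values in positions $-r,\dots,r$ are $s_1,\dots,s_{2r+1}$, and read off the intertwining relation at position $0$; this recovers the algebra-homomorphism identity. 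Part \eqref{obs:sub} follows the same pattern: an isomorphism $\AA \to \CC$ onto a subalgebra $\CC \le \BB$ composed with the inclusion $\CC \hookrightarrow \BB$ is the required $\iota$. Conversely, given an injective $\iota$ with $\overline{\iota} \circ F = G \circ \overline{\iota}$, I would set $U := \iota(S)$ and use the configuration trick above to show both that $U$ is closed under $g$ (so $(U, g|_{U^{2r+1}})$ is a subalgebra of $\BB$) and that $\iota$ is an isomorphism of $\AA$ with this subalgebra.

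Part \eqref{obs:quo} is dual and a little more delicate in the backward direction. In the forward direction I would take a congruence $\sim$ with quotient isomorphic to $\AA$ via some $\psi: T/{\sim} \to S$, set $\pi(t) := \psi([t]_\sim)$, note it is surjective, and verify $F\circ \overline{\pi} = \overline{\pi}\circ G$ by pushing $\pi$ through $g$ using the iso property of $\psi$ and the quotient rule for $h$. In the backward direction, starting from a surjective $\pi$ with $F\circ \overline{\pi} = \overline{\pi}\circ G$, I would define $\sim$ as the kernel relation $t\sim t' \iff \pi(t)=\pi(t')$ and show it is a congruence: given $t_j \sim t'_j$ for $j=1,\dots,2r+1$, choose two configurations $c,c'$ that agree in all positions outside a window of length $2r+1$ and contain the $t_j$, respectively $t'_j$, in that window; then $\overline{\pi}(c)=\overline{\pi}(c')$, so $\overline{\pi}(G(c)) = F(\overline{\pi}(c)) = F(\overline{\pi}(c')) = \overline{\pi}(G(c'))$, and comparing the two sides at the central position yields the congruence condition. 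The induced map $[t]_\sim \mapsto \pi(t)$ is then a well-defined bijection, and the iso property with $(S,f)$ is extracted from the same intertwining identity.

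The only genuinely non-trivial step is the congruence verification in \eqref{obs:quo}, which is the main obstacle: one must design $c$ and $c'$ so that they agree under $\overline{\pi}$ \emph{everywhere}, not just on the chosen window, in order to use the global identity $F\circ \overline{\pi}=\overline{\pi}\circ G$. Taking $c_k = c'_k$ outside the window (so that $\pi(c_k)=\pi(c'_k)$ trivially there) and using the hypothesis $\pi(t_j)=\pi(t'_j)$ inside the window resolves this cleanly. All remaining verifications are direct substitutions into \eqref{def:global_F}.
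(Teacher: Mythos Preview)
Your proposal is correct and complete; the configuration trick you single out for the congruence verification in part~\eqref{obs:quo} is exactly the right idea, and the remaining steps are indeed routine substitutions into \eqref{def:global_F}. The paper itself offers no proof of this statement---it is recorded as an observation and left to the reader---so your write-up supplies precisely the details the paper omits.
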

Figure \ref{fig:quotient_automaton} shows an example illustrating the notion of a quotient automaton.

\begin{ex} \label{ex:quotient_automaton} Let $\B = (T^\Z, G)$ be the CA with local algebra $\BB=(\{0,1,2,3 \}, g)_1$ defined as $g(x,y,z)= (x+z) \bmod 4$. We consider $\sim \, \subseteq \, \Z_4 \times \Z_4$ defined as $x_1 \sim x_2$ if and only if $x_1,x_2$ have the same parity. It is clear that whenever $x_1 \sim x_2$, $y_1 \sim y_2$, and $z_1 \sim z_2$ for any $x_1, x_2, y_1, y_2, z_1, z_2 \in \{0,1,2,3 \}$, then $g(x_1, y_1, z_1) \sim g(x_2, y_2, z_2)$. Thus, $\sim$ is a congruence on $\BB$ and we can study $\bigslant{\BB}{\sim}$.

Let $\A = (S^\Z, F)$ be the CA with local algebra $\AA=(\{0,1 \}, f)_1$ where $f(x,y,z)= (x+z) \bmod 2$ ($\A$ is the ECA 90). We define two mappings:
\begin{alignat*}
    \varphi \varphi: \quad &\bigslant{\BB}{\sim} \quad \longrightarrow \quad \AA  \quad \quad \quad \quad \quad \pi: \quad  && \BB \quad \longrightarrow \quad \AA\\
    & [0]_\sim = \{0, 2\}  \mapsto 0 && 0, 2  \mapsto 0\\
    & [1]_\sim = \{1, 3\}  \mapsto 1 && 1, 3  \mapsto 1
\end{alignat*}

It is straightforward to verify that $\varphi$ is an isomorphism between $\bigslant{\BB}{\sim}$ and $\AA$. Thus, it witnesses that $\AA \in \HHH(\BB)$. Moreover, one can easily check that the canonical extension of $\pi$ satisfies $F \circ \opi = \opi \circ G$ (thus, we have explicitly found the map $\pi$ from Observation \ref{obs:local_global_relation}, part \ref{obs:quo}.). Figure \ref{fig:quotient_automaton} illustrates how any space-time diagram of $\A$ can be obtained from a suitable space-time diagram of $\B$ using $\opi$ as the ``translation mapping''.
\end{ex}

\begin{figure}
\centering
\begin{tikzpicture}
\node at (-.7, .4) {$\B = (T^\Z, G)$ with local algebra $\BB=(\Z_4, g)_1$};
\node at (-1, -.1) {$g(x,y,z)= (x+z) \bmod 4$};

\node at (6.7, .4) {$\A = (S^\Z, F)$ with local algebra $\AA=(\Z_2, f)_1$};
\node at (7, -.1) {$f(x,y,z)= (x+z) \bmod 2$};

\node at (1.2, -.8) {$\pi:$};
\node at (3, -.8) {\includegraphics[width=0.18\linewidth]{./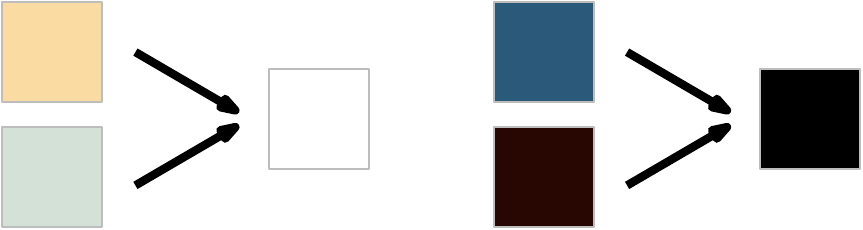}};

\node at (0, -3) {\includegraphics[width=0.3\linewidth]{./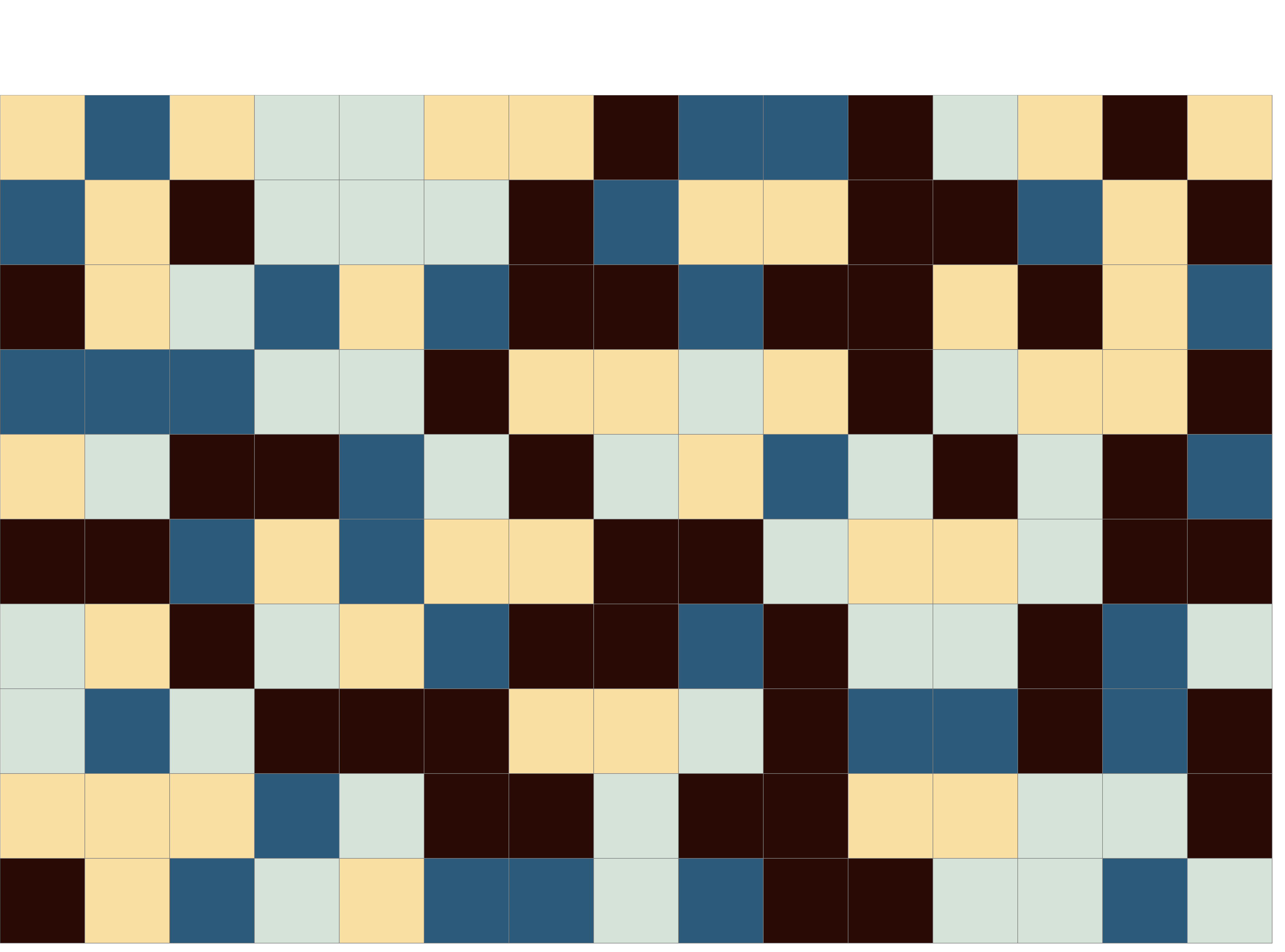}};
\node at (6, -3) {\includegraphics[width=0.3\linewidth]{./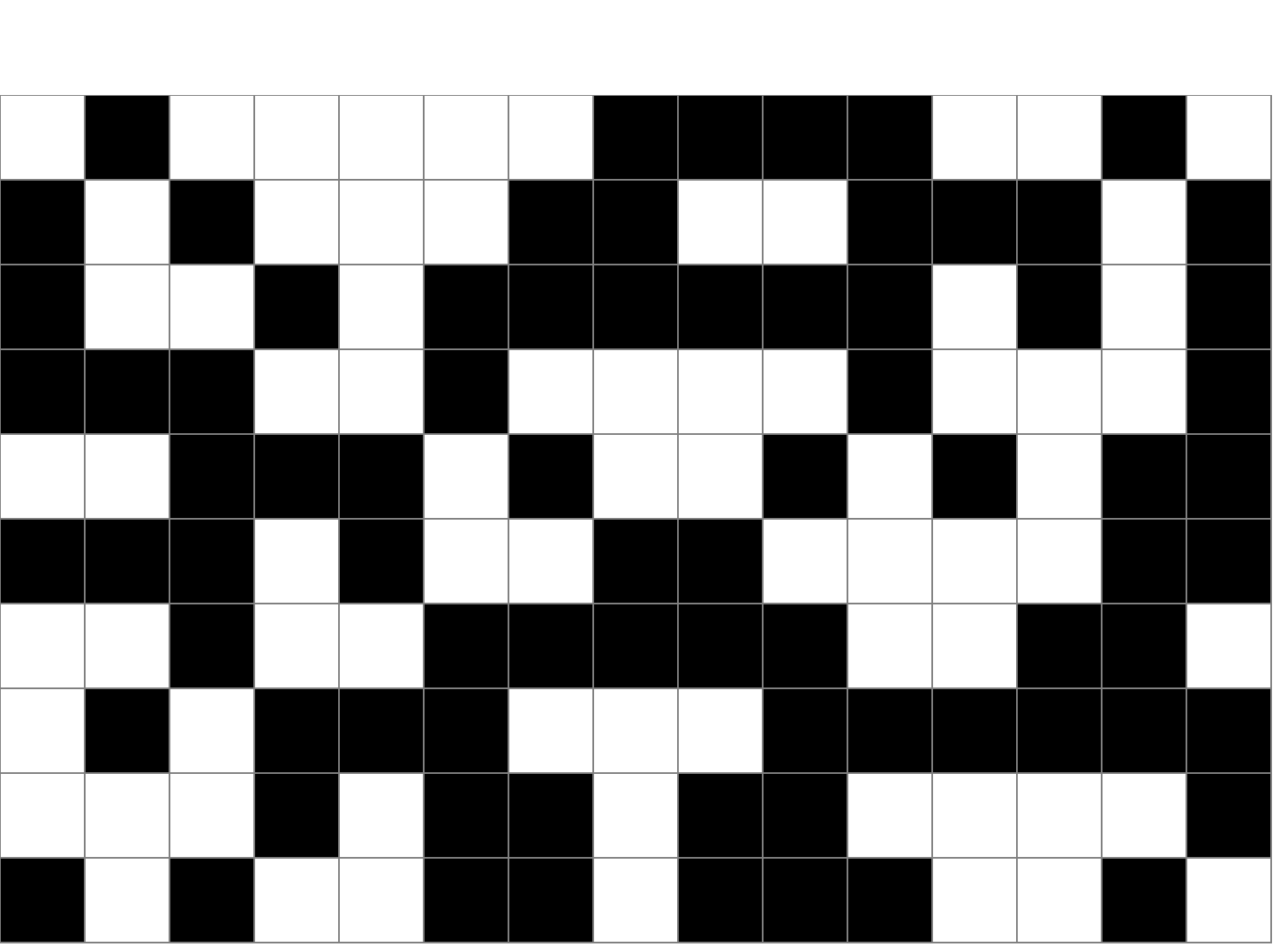}};

\node at (8.8, -3.5) (c2) {\includegraphics[width=0.055\linewidth]{./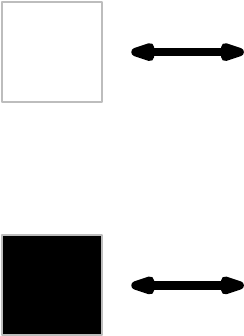}};
\node at (9.4, -3.11) {0}; \node at (9.4, -3.91) {1};
\node at (-3.2, -3.5) (c4) {\includegraphics[width=0.055\linewidth]{./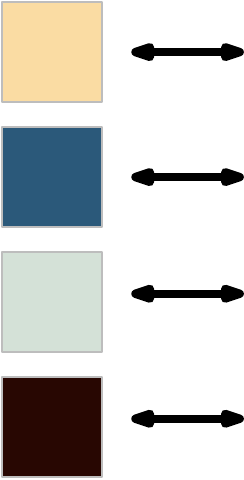}};
\node at (-2.6, -2.9) {0}; \node at (-2.6, -3.3) {1}; \node at (-2.6, -3.7) {2}; \node at (-2.6, -4.1) {3};

\path [->](2.5,-1.8) edge node [above,midway] {$\opi$} (3.4,-1.8);

\draw[->,semithick, color=black] (-2.45,-1.75) arc[radius=0.24, start angle=110, end angle=250]; \node[color=black] at (-2.85,-1.95){\small$G$};

\draw[->,semithick, color=black] (8.4,-1.75) arc[radius=0.24, start angle=70, end angle=-70]; \node[color=black] at (8.75,-1.95){\small$F$};
\end{tikzpicture}
\caption{Illustration of Example \ref{ex:quotient_automaton}. The figure shows that when $\AA \in \HHH(\BB)$, there exists a canonical extension $\opi$ which effectively ``translates'' space-time diagrams of $\B$ to any given diagram of $\A$.}
\label{fig:quotient_automaton}
\end{figure}

Let us fix the notation from Observation \ref{obs:local_global_relation}. 
The mappings $\ophi, \oi$, and $\opi$ provide means of ``translating'' between the space-time diagrams of $\A$ and $\B$. It is crucial that the mappings can be efficiently implemented by a computer program as they are extensions of mappings on finite sets. Moreover, the simplicity of the mappings guarantees that they do not process the information contained in the space-time diagrams in any non-trivial way. This is particularly important since, e.g., whenever $\AA \in \SSS(\BB)$, we would like to conclude that $\B$ is computationally stronger or equal to $\A$.

In contrast to the CA isomorphism, we can consider a general isomorphism between \chng{the CAs} $(S^\Z, F)$ and $(T^\Z, G)$ that can be witnessed by an arbitrary mapping $\psi: S^\Z \rightarrow T^\Z$ (e.g., a non-recursive one). In such a case, we say that $\A$ and $\B$ are \emph{isomorphic as dynamical systems}.

The operators $\SSS, \HHH$, and $\Pfin$ allow us to compare CAs' local rules. However, they do not take into account the most important aspect of cellular automata: the iterative application of the local rules. Thus, we describe below the established notion of \emph{CA iterative powers} also called \emph{grouping} \cite{inducing_order_on_ca_by_grouping}. This notion naturally extends the possible relationships we can study between two automata and, together with the $\SSS, \HHH$, and $\Pfin$ operators, will lead to the definition of CA simulation.

\subsection{Iterative Powers of CAs}
Let $\B = (T^\Z, G)$ be a CA with local algebra $\BB = (T, g)_r$. It is natural to iterate the global rule $G$ to obtain its powers $G^n$ for each $n \in \N$. This yields the ``iterated'' automaton $(T^\Z, G^n)$. The goal of this section is to describe the construction of the local algebra $\BB^{[n]}$ corresponding to the iterated automaton $(T^\Z, G^n)$ in such a way that the type of $\BB$ and $\BB^{[n]}$ remains the same.

We start by defining a function $\widetilde{g}$ which can be seen as an intermediate step between a CA's local map $g$ (with inputs of fixed length $2r+1$) and its global map $G$ (acting on infinite sequences).

\begin{defn}[Unravelling a local function]
 Let $T$ be a finite set and $g: T^{2r+1} \rightarrow T$ a function. We extend $g$ to a mapping $\widetilde{g}: \bigcup_{k=2r+1}^\infty T^k \rightarrow \bigcup_{k=1}^\infty T^k$ defined as:
\begin{align*}
\widetilde{g}(u_1 \cdots u_k) = g(u_1, \ldots, u_{2r+1}) g(u_2, \ldots, u_{2r+2}) \cdots g(u_{k-2r}, \ldots, u_{k})
\end{align*} for any $u_1 \cdots u_k \in T^k$, $k \geq 2r+1$. By $\widetilde{g}^n$ we simply mean the composition
$\widetilde{g}^n  = \underbrace{\widetilde{g} \circ \widetilde{g} \circ \cdots \circ \widetilde{g}}_{n \times}.$ Each application of $\widetilde{g}$ shortens the input sequence by $2r$. Therefore, the $n$-th iteration shortens the input by $2nr$ and $\widetilde{g}^n$ can be seen as a $(2nr{+}1)\text{-ary function}$ function into $T$, formally
$\restr{\widetilde{g}^n}{T^{2nr+1}}: T^{2nr+1} \rightarrow T$. This is illustrated in Figure \ref{fig:ftilde}.
\end{defn}
\begin{figure}[!htbp]
\centering
\begin{tikzpicture}
\def \x{0.4};

\draw[step=\x,color=black] (-6-2*\x,0) grid (-6+7*\x,\x);
\draw[orange, very thick] (-6-2*\x+0.03,.02) rectangle (-6-\x-0.02,\x-0.02);
\draw[orange, very thick] (-6-\x+0.03,.02) rectangle (-6-0.02,\x-0.02);
\draw[orange, very thick] (-6+0.03,.02) rectangle (-6+\x-0.02,\x-0.02);
\draw[orange, very thick] (-6+\x+0.03,.02) rectangle (-6+2*\x-0.02,\x-0.02);
\draw[orange, very thick] (-6+2*\x+0.03,.02) rectangle (-6+3*\x-0.02,\x-0.02);
\draw[orange, very thick] (-6+3*\x+0.03,.02) rectangle (-6+4*\x-0.02,\x-0.02);
\draw[orange, very thick] (-6+4*\x+0.03,.02) rectangle (-6+5*\x-0.02,\x-0.02);
\draw[orange, very thick] (-6+5*\x+0.03,.02) rectangle (-6+6*\x-0.02,\x-0.02);
\draw[orange, very thick] (-6+6*\x+0.03,.02) rectangle (-6+7*\x-0.02,\x-0.02);

\draw[step=0.4cm,color=black] (-6-\x,-.4) grid (-6+6*\x,0);
\draw[step=0.4cm,color=black] (-6,-.8) grid (-6+5*\x,-.4);
\draw[step=0.4cm,color=black] (-6+\x,-1.2) grid (-6+4*\x,-.8);
\draw[step=0.4cm,color=black] (-6+\x,-1.2) grid (-6+4*\x,-.8);
\draw[step=0.4cm,color=black] (-6+2*\x,-1.6) grid (-6+3*\x,-1.2);
\draw[orange, very thick] (-6+2*\x+0.03,-1.6+0.03) rectangle (-6+3*\x-0.02,-1.2-0.02);

\draw[->,semithick] (-6.8+3.8,.3) arc[radius=.2, start angle=70, end angle=-70];
\node at (-6.8+4.1, 0.1) {$\widetilde{g}$};

\draw[->,semithick] (-6.8+3.8,-.2) arc[radius=.2, start angle=70, end angle=-70];
\node at (-6.8+4.1, -.4) {$\widetilde{g}$};

\draw[->,semithick] (-6.8+3.8,-.7) arc[radius=.2, start angle=70, end angle=-70];
\node at (-6.8+4.1, -0.9) {$\widetilde{g}$};

\draw[->,semithick] (-6.8+3.8,-1.2) arc[radius=.2, start angle=70, end angle=-70];
\node at (-6.8+4.1, -1.4) {$\widetilde{g}$};

\draw[->,semithick, orange] (-2*\x-6.2,.2) arc[radius=.8, start angle=110, end angle=260];
\node at (-6.95-2*\x, -.6) {$\widetilde{g}^4$};

\end{tikzpicture}

\caption{Illustration of $\widetilde{g}$ and $\widetilde{g}^n$ for a ternary ($r=1$) function $g$ and $n=4$.}
\label{fig:ftilde}
\end{figure}
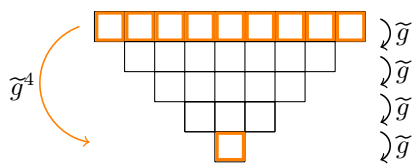

 It is straightforward to verify that the iterated CA $\left(T^\Z, G^n\right)$ has the local algebra $\left(T, \restr{\widetilde{g}^n}{T^{2nr+1}}\right)_{nr}$ with type $2nr+1$. 
Its type is different from $(T, g)_r$ whenever $n>1$. Our goal is to introduce a simulation relation between two automata via the notions of automata quotients, sub-automata, finite products, and iterative powers. \chng{Since the operators of sub-automata, quotients and finite products already preserve algebra types, it is natural to require this also from the iterative powers.} For them, this can be achieved by ``grouping together sequences of $n$ consecutive states''. This is formally defined below.

\begin{defn}
   Let $\B = (T^\Z, G)$ be a CA with local algebra $\BB = (T, g)_r$ and let $n \in \N$. We define the \emph{unpacking map} $o_n: (T^n)^\Z \rightarrow T^\Z$ for any configuration $c \in (T^n)^\Z$ and any $j \in \Z$, \chng{$j = nq + s$, $0 \leq s<n$}, as $\big(o_n(c)\big)_j = (c_q)_s$.
\end{defn}
 This is illustrated in Figure \ref{fig:unpacking_map}. 

\begin{figure}[htbp]
\centering
\begin{tikzpicture}
\def \x{0.25}
\def \lx{0.75}

\draw[xstep=\x,ystep=\x,color=black!40] (0,0) grid (6*\lx, \x);
\draw[xstep=\x,ystep=\x,color=black!40] (0,-3*\x) grid (6*\lx, -2*\x);

\draw[xstep=\lx,ystep=\x,color=black, very thick] (0,0) grid (6*\lx, \x);
\draw[xstep=\x,ystep=\x,color=black, very thick] (0,-3*\x) grid (6*\lx, -2*\x);

\draw[->,semithick] (6*\lx+0.1,\x/2-0.02) arc[radius=1.5*\x, start angle=70, end angle=-70]; \node at (6*\lx+0.7,-\x){\small$o_3$};
\end{tikzpicture}
\caption{Diagram of the unpacking map $o_3$.}
\label{fig:unpacking_map}
\end{figure}
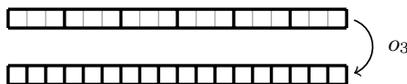

\begin{defn}
    Let $\B = (T^\Z, G)$ be a CA with local algebra $\BB = (T, g)_r$ and let $n \in \N$. We define the global map $G^{[n]}:\left(T^n\right)^\Z \rightarrow (T^n)^\Z$ as $G^{[n]} = o_n^{-1} \circ G^n \circ o_n$. This yields a new dynamical system $\left((T^n)^\Z, G^{[n]}\right)$. Further, we define the function $g^{[n]}: \underbrace{T^n \times \cdots \times T^n}_{(2r+1) \times} \rightarrow T^n$ as: 
$$g^{[n]}(\xx_{-r}, \ldots, \xx_r) \coloneqq \widetilde{g}^n (x_{-r}^1 \cdots x_{-r}^n x_{-r+1}^1 \cdots x_{-r+1}^n \cdots x_r^1 \cdots x_r^n) $$
where for each $i$, $\xx_i = x_i^1 \cdots x_i^n \in T^n$. The function $g^{[n]}$ has arity $2r+1$ and is illustrated in Figure \ref{fig:f_box}.
\end{defn}

\begin{obs}
Let $\B = (T^\Z, G)$ be a CA with local algebra $\BB = (T, g)_r$ and let $n \in \N$. Then, $\left((T^n)^\Z, G^{[n]}\right)$ is a CA with local algebra $(T^n, g^{[n]})$ and radius $r$.

Clearly, there is no automaton-isomorphism between $(T^\Z, G^n)$ and $\left((T^n)^\Z, G^{[n]}\right)$ since their state spaces have different sizes. However, they are isomorphic as dynamical systems via the unpacking mapping $o_n: (T^n)^\Z \rightarrow T^\Z$. Thus, the two systems do have equivalent dynamics.
\end{obs}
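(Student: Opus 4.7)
The plan is to verify the local-rule identity
\[
G^{[n]}(\mathbf{c})_i \;=\; g^{[n]}(\mathbf{c}_{i-r}, \ldots, \mathbf{c}_{i+r})
\]
for every $\mathbf{c} \in (T^n)^\Z$ and every $i \in \Z$; this is precisely what the assertion ``$((T^n)^\Z, G^{[n]})$ is a CA of radius $r$ with local algebra $(T^n, g^{[n]})$'' unfolds to. The only nontrivial ingredient is the characterization of $G^n$ recalled just before the definition of $o_n$: the iterated CA $(T^\Z, G^n)$ has local algebra $\bigl(T, \restr{\widetilde{g}^n}{T^{2nr+1}}\bigr)_{nr}$, so $G^n(d)_j = \widetilde{g}^n(d_{j-nr}, \ldots, d_{j+nr})$ for any $d \in T^\Z$ and $j \in \Z$.

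First I would unfold the left-hand side. Since $G^{[n]} = o_n^{-1} \circ G^n \circ o_n$, by definition of $o_n$ the block $G^{[n]}(\mathbf{c})_i \in T^n$ is just the tuple
\[
\bigl( G^n(o_n(\mathbf{c}))_{in},\; G^n(o_n(\mathbf{c}))_{in+1},\; \ldots,\; G^n(o_n(\mathbf{c}))_{in+n-1} \bigr).
\]
Then I would prove the following mild generalisation of the formula for $G^n$: whenever $\widetilde{g}^n$ is applied to a window $d_a d_{a+1} \cdots d_b$ of length $L = b-a+1 \ge 2nr+1$ of some $d \in T^\Z$, the result has length $L - 2nr$ and its $\ell$-th symbol equals $G^n(d)_{a+nr+\ell-1}$. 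This is immediate by iterating the unravelling definition of $\widetilde{g}$, and it is the precise sense in which the iterate has radius $nr$.

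Applying this with $a = (i-r)n$ and $b = (i+r+1)n - 1$ produces an output of length $(2r+1)n - 2nr = n$, and by the definition of $o_n$ the relevant window of $o_n(\mathbf{c})$ is exactly the concatenation of the blocks $\mathbf{c}_{i-r}, \ldots, \mathbf{c}_{i+r}$ read coordinate by coordinate. Hence the result of $\widetilde{g}^n$ on this window is, by the very definition of $g^{[n]}$, equal to $g^{[n]}(\mathbf{c}_{i-r}, \ldots, \mathbf{c}_{i+r})$. On the other hand its $\ell$-th symbol is $G^n(o_n(\mathbf{c}))_{(i-r)n + nr + \ell - 1} = G^n(o_n(\mathbf{c}))_{in + \ell - 1}$, which matches the tuple displayed above position by position. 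This establishes the local-rule formula.

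For the second half of the observation, $o_n$ is a bijection between $(T^n)^\Z$ and $T^\Z$ (its inverse packs every $n$ consecutive symbols into one block), and composing the defining equation $G^{[n]} = o_n^{-1} \circ G^n \circ o_n$ on the left with $o_n$ yields $o_n \circ G^{[n]} = G^n \circ o_n$, which is the conjugacy required for a dynamical-system isomorphism. There is no automaton-isomorphism as soon as $|T| \ge 2$ and $n \ge 2$, simply because then $|T^n| \ne |T|$. The only real obstacle in the whole argument is purely notational, namely keeping the two indexings (positions in $T^\Z$ versus blocks in $(T^n)^\Z$) consistent; once the extended identity for $\widetilde{g}^n$ is in place, everything collapses to elementary index arithmetic.
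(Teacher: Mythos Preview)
Your argument is correct: the index bookkeeping is right, the generalised identity for $\widetilde{g}^n$ on a window of length $L\ge 2nr+1$ is exactly what is needed, and the conjugacy $o_n\circ G^{[n]}=G^n\circ o_n$ is immediate from the definition. One minor remark: the paper's claim that there is ``no automaton-isomorphism'' tacitly assumes $|T|\ge 2$ and $n\ge 2$, and you correctly make that explicit.

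As for comparison with the paper: there is nothing to compare against. The paper states this as an observation and gives no proof whatsoever; the word ``Clearly'' in the second paragraph is the entire justification offered. Your write-up therefore supplies what the paper omits, carrying out the routine but slightly index-heavy verification that the authors left to the reader.
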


 \begin{defn}[Iterative power of an automaton] \label{def:iterative_automaton}
      Let $\B = (T^\Z, G)$ be a CA with local algebra $\BB = (T, g)_r$. For $n \in \N$, we define the $n$-th \emph{iterative power} of $\B$ as the automaton $\B^{[n]} = \left((T^n)^\Z, G^{[n]}\right)$ with local algebra $\BB^{[n]} = \left( T^n, g^{[n]} \right)_r$.
 \end{defn}

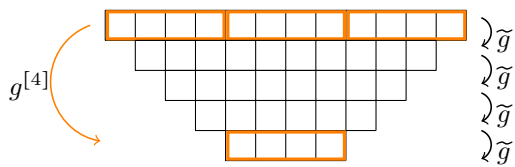
\begin{figure}[!htbp]
\centering
\begin{tikzpicture}
\def \x{0.4};

\draw[step=\x,color=black] (0,0) grid (12*\x,\x);
\draw[orange, very thick] (.03,.02) rectangle (4*\x-0.02,\x-0.02);
\draw[orange, very thick] (4*\x+.03,.02) rectangle (8*\x-0.02,\x-0.02);
\draw[orange, very thick] (8*\x+.03,.02) rectangle (12*\x-0.02,\x-0.02);

\draw[step=0.4cm,color=black] (\x,-.4) grid (11*\x,0);
\draw[step=0.4cm,color=black] (2*\x,-.8) grid (10*\x,-.4);
\draw[step=0.4cm,color=black] (3*\x,-1.2) grid (9*\x,-.8);
\draw[step=0.4cm,color=black] (4*\x,-1.6) grid (8*\x,-1.2);
\draw[orange, very thick] (4*\x+0.03,-1.58) rectangle (8*\x-0.02,-1.22);

\draw[->,semithick, orange] (-.2,.2) arc[radius=.8, start angle=110, end angle=260];
\node at (-1, -.6) {$g^{[4]}$};

\draw[->,semithick] (5,.3) arc[radius=.2, start angle=70, end angle=-70];
\node at (5.3, 0) {$\widetilde{g}$};

\draw[->,semithick] (5,-.2) arc[radius=.2, start angle=70, end angle=-70];
\node at (5.3, -.5) {$\widetilde{g}$};

\draw[->,semithick] (5,-.7) arc[radius=.2, start angle=70, end angle=-70];
\node at (5.3, -1) {$\widetilde{g}$};

\draw[->,semithick] (5,-1.2) arc[radius=.2, start angle=70, end angle=-70];
\node at (5.3, -1.5) {$\widetilde{g}$};

\end{tikzpicture}

\caption{Illustration of $g^{[4]}$ for a ternary function $g$.}
\label{fig:f_box}
\end{figure}

In this way, the local algebra $\BB^{[n]}$ describes exactly $n$ iterations of the CA's local rule while preserving the type of $\BB$. Therefore, each local algebra $\BB$ of a CA gives rise to a series of algebras of the same type, but operating on ``larger scales'' in both state space and time:

$$ \BB=(T, g)_r,\,\BB^{[2]}=(T^2, g^{[2]})_r,\,\BB^{[3]}=(T^3, g^{[3]})_r, \, \ldots$$

\begin{defn} \label{def:iterative_powers}
Let $\KK$ be a class of local algebras of a family of CAs with radius $r$. We define the \emph{iterative power} operator:
\begin{align*}
    \Xi(\KK) &= \{\AA \mid \exists \BB \in \KK \text{ and } n \in \N \text{ such that } \AA \cong \BB^{[n]}\}.\\
\end{align*}
\end{defn}

In contrast to the operators $\HHH, \SSS$, and $\Pfin$ that are algebraic in their nature, $\Xi$ can be viewed as a geometrical transformation of a CA's space-time diagrams. Indeed, the $n$-th iterative power essentially ``selects'' every $n$-th row of the space-time diagram, and groups the ``cells'' together into packages of $n$ consecutive cells via the $o_n^{-1}$ operator. This grouping is necessary for preserving the iterated automata's radius. We can finally proceed to introducing the general definition of CA simulation.

\begin{defn}[CA simulation] \label{def:ca_simulation}  Let $\A$ and $\B$ be cellular automata with local algebras $\AA = (S, f)_r$ and $\BB = (T, g)_r$ respectively. We say that $\B$ \emph{simulates} $\A$ if $\AA \in \HHH\SSS\Pfin \Xi(\BB)$; and we write $\AA \preceq \BB$ or $\A \preceq \B$.  
\end{defn}

Note that the simulation relation is well-defined only for cellular automata with the same radius. However, whenever we consider an automaton $\A$ with radius $r$, we can also interpret it as a CA with radius $r' \in \N$ for any $r' > r$ in a natural way. This allows us to compare automata with different radii after all.

A perhaps more natural definition of simulation would be to require the following properties: First, $\A \preceq \B$ if at least one of the following properties hold: 
\begin{enumerate}
    \item $\AA \in \SSS(\BB)$; $\A$ is (isomorphic to) a sub-automaton of $\B$.
    \item $\AA \in \HHH(\BB)$; $\A$ is (isomorphic to) a quotient automaton of $\B$.
    \item $\AA \in \Xi(\BB)$; $\A$ is (isomorphic to) an iterative power of $\B$.
    \item $\AA \in \Pfin\big(\HHH(\BB) \cup \SSS(\BB) \cup \Xi(\BB)\big)$; $\A$ is (isomorphic to) a finite product of automata which are themselves ``simulated'' by $\B$.
\end{enumerate}

This is well-motivated, since in all these cases, any computation in $\A$ can be easily recovered from a computation in $\B$. But any reasonable definition of simulation also must be transitive. \chng{In the following subsection, we show that we have chosen the operators $\HHH, \SSS, \Pfin, \Xi$ in Definition \ref{def:ca_simulation} carefully, exactly in this order, because this yields the transitive closure of the requirements above.}

\subsection{Elementary Properties of CA Simulation}

\begin{lemma} \label{lemma:Pfin_Pfin}
    Let $\AA$ be a local algebra of a CA $\A$ and let $m, n \in \N$. Then, $(\AA^{[m]})^{[n]} \cong \AA^{[mn]}$.
\end{lemma}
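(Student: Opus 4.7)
The plan is to exhibit an explicit natural bijection between the state spaces $(S^m)^n$ and $S^{mn}$ and verify the isomorphism by passing to the global level via Observation \ref{obs:local_global_relation}, part \ref{obs:iso}. This avoids the combinatorial nightmare of directly unraveling the definitions of $f^{[m]}$ and $(f^{[m]})^{[n]}$ against $f^{[mn]}$ at the local-rule level.

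First I would define the ``flattening'' bijection $\varphi: (S^m)^n \to S^{mn}$ sending $(y_0, \ldots, y_{n-1})$, where each $y_i = y_i^1 \cdots y_i^m \in S^m$, to the concatenation $y_0^1 \cdots y_0^m \, y_1^1 \cdots y_1^m \, \cdots \, y_{n-1}^1 \cdots y_{n-1}^m \in S^{mn}$. The key combinatorial fact is that its canonical extension $\overline{\varphi}: ((S^m)^n)^\Z \to (S^{mn})^\Z$ intertwines the two natural ways of fully unpacking a configuration down to $S^\Z$:
\begin{equation*}
o_m \circ o_n \,=\, o_{mn} \circ \overline{\varphi} \quad \text{as maps } \,((S^m)^n)^\Z \to S^\Z.
\end{equation*}
This is seen by unwinding the indexing: writing a global position $j = mnQ + mr' + r''$ with $0 \leq r' < n$ and $0 \leq r'' < m$, both sides read off the $r''$-th coordinate of the $r'$-th inner block of the $Q$-th outer cell.

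With that identity in hand, the rest is a direct chase of definitions. Writing $F$ for the global rule of $\A$ and repeatedly using $F^{[n]} = o_n^{-1} \circ F^n \circ o_n$ (so that, telescopically, $(F^{[m]})^n = o_m^{-1} \circ F^{mn} \circ o_m$), one computes
\begin{align*}
(F^{[m]})^{[n]} &= o_n^{-1} \circ (F^{[m]})^n \circ o_n = o_n^{-1} \circ o_m^{-1} \circ F^{mn} \circ o_m \circ o_n\\
&= (o_m \circ o_n)^{-1} \circ F^{mn} \circ (o_m \circ o_n)\\
&= \overline{\varphi}^{-1} \circ o_{mn}^{-1} \circ F^{mn} \circ o_{mn} \circ \overline{\varphi} = \overline{\varphi}^{-1} \circ F^{[mn]} \circ \overline{\varphi}.
\end{align*}
Rearranging gives $\overline{\varphi} \circ (F^{[m]})^{[n]} = F^{[mn]} \circ \overline{\varphi}$, and Observation \ref{obs:local_global_relation}, part \ref{obs:iso}, then yields $(\AA^{[m]})^{[n]} \cong \AA^{[mn]}$.

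The only step requiring real care -- and the main obstacle if one instead tried a direct attack at the local-algebra level, where one would have to track how $\widetilde{g}$ distributes across nested block boundaries -- is verifying $o_m \circ o_n = o_{mn} \circ \overline{\varphi}$. This is purely a bookkeeping check that the chosen flattening $\varphi$ respects the position-indexing convention of the unpacking maps; morally it is just associativity of ``concatenating blocks of blocks''. Once this identity is granted, the isomorphism assertion is a three-line algebraic manipulation.
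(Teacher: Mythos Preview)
Your proposal is correct and follows essentially the same approach as the paper: both define the natural flattening bijection between $(S^m)^n$ and $S^{mn}$, verify the key identity relating $o_m \circ o_n$ and $o_{mn}$ through this bijection, and then conjugate at the global level to invoke Observation \ref{obs:local_global_relation}. The only cosmetic difference is that the paper orients $\varphi$ in the opposite direction (from $S^{mn}$ to $(S^m)^n$), so its key identity reads $o_m \circ o_n \circ \overline{\varphi} = o_{mn}$ rather than your $o_m \circ o_n = o_{mn} \circ \overline{\varphi}$; the computations are otherwise line-for-line the same.
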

\begin{proof}
    Let $\A = (S^\Z, F)$. Then by definition $\A^{[mn]}= \left((S^{mn})^\Z, F^{[mn]}\right)$; $F^{[mn]} = o_{mn}^{-1} \circ F^{mn} \circ o_{mn}$, where $o_{mn}: (S^{mn})^\Z \rightarrow S^\Z$ is the unpacking map. Similarly, $(\A^{[m]})^{[n]} = \left( ((S^m)^n)^\Z, (F^{[m]})^{[n]} \right)$; $(F^{[m]})^{[n]} = o_n^{-1} \circ \left( o_m^{-1} \circ F^m \circ o_m \right)^n \circ o_n$, where $o_m: S^m \rightarrow S$ and $o_n: (S^m)^n \rightarrow S^m$. Let $\varphi: S^{mn} \rightarrow (S^m)^n$ be the natural bijection. We will show that its extension $\ophi$ satisfies $\ophi^{-1} \circ (F^{[m]})^{[n]} \circ \ophi = F^{[mn]}$. Then, Observation \ref{obs:local_global_relation} already implies that the corresponding local algebras are isomorphic. It is easy to verify that $o_m \circ o_n \circ \ophi = o_{mn}$. Then:
    \begin{align*}
        \ophi^{-1} \circ (F^{[m]})^{[n]} \circ \ophi &= \ophi^{-1} \circ o_n^{-1} \circ \left( o_m^{-1} \circ F^m \circ o_m \right)^n \circ o_n \circ \ophi\\
        &= \ophi^{-1} \circ o_n^{-1} \circ o_m^{-1} \circ F^{mn} \circ o_m \circ o_n \circ \ophi\\
        &= o_{mn}^{-1} \circ F^{mn} \circ o_{mn} = F^{[mn]}.
        \qedhere
    \end{align*}

\end{proof}

\begin{lemma} \label{lemma:Pfin_Xsi}
    Let $\AA_1, \ldots, \AA_k$ be local algebras of some CAs with radius $r$ and let $n \in \N$. Then $\left(\AA_1 \times \cdots \times \AA_k \right)^{[n]} \cong \AA_1^{[n]} \times \cdots \times \AA_k^{[n]}$.
\end{lemma}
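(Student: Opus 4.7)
My plan is to exhibit the natural rearrangement bijection $\varphi: (S_1 \times \cdots \times S_k)^n \to S_1^n \times \cdots \times S_k^n$ that swaps the order of the ``inner'' product coordinates and the ``outer'' $n$ copies coming from grouping, and verify that it is an isomorphism between the two local algebras. Let $\AA = \AA_1 \times \cdots \times \AA_k = (S, f)_r$ with $S = S_1 \times \cdots \times S_k$. Recall that by definition of the product of algebras, $f$ acts coordinatewise: for any $s_1, \ldots, s_{2r+1} \in S$ with $s_j = (s_j^1, \ldots, s_j^k)$, the $i$-th coordinate of $f(s_1, \ldots, s_{2r+1})$ equals $f_i(s_1^i, \ldots, s_{2r+1}^i)$.

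The first key step is a short induction on $m$ showing that the unravelling $\widetilde{f}^m$ also acts coordinatewise, i.e., for any sequence $u_1 \cdots u_\ell \in S^\ell$ with $\ell \geq 2mr + 1$, the $i$-th coordinate of $\widetilde{f}^m(u_1 \cdots u_\ell)$ equals $\widetilde{f_i}^m(u_1^i \cdots u_\ell^i)$. The base case $m = 1$ is immediate from the definition of $\widetilde{f}$ combined with the coordinatewise action of $f$; the inductive step then follows because a composition of coordinatewise maps is again coordinatewise.

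The second step is to unfold both sides on matching inputs. Given $\xx_{-r}, \ldots, \xx_r \in S^n$ with $\xx_j = x_j^1 \cdots x_j^n$ and each $x_j^l = (x_j^{l,1}, \ldots, x_j^{l,k}) \in S$, the operation $f^{[n]}$ is defined as $\widetilde{f}^n$ applied to the concatenation $x_{-r}^1 \cdots x_{-r}^n \cdots x_r^1 \cdots x_r^n$. By the first step, its $i$-th coordinate equals $\widetilde{f_i}^n$ applied to the corresponding sequence of $i$-th coordinates $x_{-r}^{1,i} \cdots x_{-r}^{n,i} \cdots x_r^{1,i} \cdots x_r^{n,i}$, which is precisely $f_i^{[n]}$ evaluated on the $i$-th components of $\varphi(\xx_{-r}), \ldots, \varphi(\xx_r)$. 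Hence $\varphi(f^{[n]}(\xx_{-r}, \ldots, \xx_r))$ agrees with $(f_1^{[n]} \times \cdots \times f_k^{[n]})(\varphi(\xx_{-r}), \ldots, \varphi(\xx_r))$, which is exactly the condition for $\varphi$ to be an isomorphism from $(\AA_1 \times \cdots \times \AA_k)^{[n]}$ to $\AA_1^{[n]} \times \cdots \times \AA_k^{[n]}$.

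The main obstacle is purely notational, namely keeping the nested indexing straight between the outer $n$-fold grouping, the product structure over $k$ factors, and the original state coordinates. Conceptually there is nothing deep: both the product and the iterative power constructions are assembled from coordinatewise operations, and any two coordinatewise constructions commute up to the evident reshuffling of indices.
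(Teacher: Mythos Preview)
Your proof is correct and follows exactly the approach the paper sketches: the paper simply asserts that the natural bijection $\varphi: (S_1 \times \cdots \times S_k)^n \to S_1^n \times \cdots \times S_k^n$ is an isomorphism and calls the verification ``straightforward, yet technical'', while you have actually carried out that verification via the coordinatewise action of $\widetilde{f}^m$.
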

\begin{proof}
    Let $\AA_i = (S_i, f_i)_r$ for $1 \leq i \leq k$.
    Similarly to Lemma \ref{lemma:Pfin_Pfin}, it is a straightforward, yet technical verification that the natural bijection $\varphi: (S_1 \times \cdots \times S_k)^n \rightarrow S_1^n \times \cdots \times S_k^n$ is an isomorphism witnessing that $\left(\AA_1 \times \cdots \times \AA_k \right)^{[n]} \cong \AA_1^{[n]} \times \cdots \times \AA_k^{[n]}$. 
\end{proof}

\begin{prop} \label{prop:hspx_is_closed}
    Consider any class of CAs with a fixed radius $r$ and let $\KK$ be the class of their local algebras. Then, $\HHH\SSS\Pfin \Xi(\KK)$ is already closed under the operators $\HHH, \SSS, \Pfin$, and $\Xi$. 
\end{prop}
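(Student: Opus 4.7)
The plan is to reduce the closure statement to a collection of commutation-style inclusions between the individual operators, and then compose these inclusions to push any outer $\HHH$, $\SSS$, $\Pfin$, or $\Xi$ through to the rightmost position. Throughout, every operator takes a class of local algebras of fixed radius $r$ to another such class, and each is automatically closed under isomorphism.

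First I would collect the following inclusions. From universal algebra (essentially the standard ingredients of Birkhoff's HSP theorem, easily verified directly):
\begin{equation*}
\HHH\HHH \subseteq \HHH, \quad \SSS\SSS \subseteq \SSS, \quad \Pfin\Pfin \subseteq \Pfin, \quad \SSS\HHH \subseteq \HHH\SSS, \quad \Pfin\HHH \subseteq \HHH\Pfin, \quad \Pfin\SSS \subseteq \SSS\Pfin.
\end{equation*}
From the previous two lemmas I get $\Xi\Xi \subseteq \Xi$ (Lemma \ref{lemma:Pfin_Pfin}) and $\Xi\Pfin \subseteq \Pfin\Xi$ (Lemma \ref{lemma:Pfin_Xsi}). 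What remains is to establish the two new inclusions $\Xi\SSS \subseteq \SSS\Xi$ and $\Xi\HHH \subseteq \HHH\Xi$.

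For $\Xi\SSS \subseteq \SSS\Xi$, given $\AA = (S,f)_r$ a subalgebra of $\BB=(T,g)_r$, I would unwind the definitions and observe that $\widetilde{f}$ is the restriction of $\widetilde{g}$ to sequences over $S$, so inductively $\widetilde{f}^n$ is the restriction of $\widetilde{g}^n$ to sequences over $S$. Consequently $f^{[n]} = g^{[n]}\!\restriction_{(S^n)^{2r+1}}$, which exhibits $\AA^{[n]}$ as a subalgebra of $\BB^{[n]}$. For $\Xi\HHH \subseteq \HHH\Xi$, given a congruence ${\sim}$ on $\BB$ with $\AA \cong \BB/{\sim}$, I would define the componentwise relation ${\sim_n}$ on $T^n$ by $\mathbf{x}\sim_n \mathbf{y}$ iff $x_i \sim y_i$ for every $i$. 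Because $\sim$ is preserved by $g$, it is preserved by every application of $\widetilde{g}$, and hence by $g^{[n]}$, making $\sim_n$ a congruence on $\BB^{[n]}$. The natural bijection between $T^n/{\sim_n}$ and $(T/{\sim})^n$ then witnesses $\BB^{[n]}/{\sim_n} \cong (\BB/{\sim})^{[n]}$, as required.

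Once these inclusions are in place, closure is a routine word-rewriting exercise. For the hardest case, the outer $\Xi$, I would compute
\begin{align*}
\Xi\,\HHH\SSS\Pfin\Xi(\KK) &\subseteq \HHH\,\Xi\,\SSS\Pfin\Xi(\KK) \subseteq \HHH\SSS\,\Xi\,\Pfin\Xi(\KK)\\
&\subseteq \HHH\SSS\Pfin\,\Xi\Xi(\KK) \subseteq \HHH\SSS\Pfin\Xi(\KK),
\end{align*}
and analogous chains handle outer $\Pfin$ (using $\Pfin\HHH \subseteq \HHH\Pfin$, $\Pfin\SSS \subseteq \SSS\Pfin$, $\Pfin\Pfin \subseteq \Pfin$), outer $\SSS$ (using $\SSS\HHH \subseteq \HHH\SSS$ and $\SSS\SSS \subseteq \SSS$), and outer $\HHH$ (using just $\HHH\HHH \subseteq \HHH$). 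The main obstacle is the verification of $\Xi\SSS \subseteq \SSS\Xi$ and $\Xi\HHH \subseteq \HHH\Xi$: both facts are intuitive, but they require carefully unwinding $g^{[n]}$ through $\widetilde{g}^n$ and checking that restriction and quotienting commute with each intermediate application of $\widetilde{g}$.
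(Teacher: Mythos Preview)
Your proposal is correct and follows essentially the same route as the paper: gather the standard universal-algebra commutation inclusions, add $\Xi\Xi\subseteq\Xi$ and $\Xi\Pfin\subseteq\Pfin\Xi$ from the preceding lemmas, prove $\Xi\SSS\subseteq\SSS\Xi$ and $\Xi\HHH\subseteq\HHH\Xi$ directly, and then push each outer operator to the right by word-rewriting. The only cosmetic difference is that you phrase $\Xi\HHH\subseteq\HHH\Xi$ via the componentwise congruence $\sim_n$ on $T^n$, whereas the paper phrases the same argument via the componentwise surjective homomorphism $\psi:T^n\to S^n$; these are the two sides of the first isomorphism theorem and amount to the same verification.
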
 
\begin{proof}
Let $\KK'$ be an arbitrary class of local algebras of the same type. It is a well-known result from universal algebra \cite{ua_bergman} that:
\begin{alignat*}{5}
    \HHH\HHH(\KK')&=\HHH(\KK'), \quad \quad \quad &&\SSS\SSS(\KK')&&=\SSS(\KK'), \quad \quad \quad \Pfin &&\Pfin(\KK')&&=\Pfin(\KK'),\\ 
    \SSS\HHH(\KK')&\subseteq \HHH\SSS(\KK'), &&\Pfin \HHH(\KK') &&\subseteq \HHH \Pfin (\KK'), &&\Pfin \SSS(\KK') &&\subseteq \SSS \Pfin(\KK').
\end{alignat*}
Moreover, Lemma \ref{lemma:Pfin_Pfin} implies $\Xi \Xi (\KK') = \Xi(\KK')$ and Lemma \ref{lemma:Pfin_Xsi} yields $\Xi \Pfin(\KK') \subseteq \Pfin \Xi(\KK')$. Further, in \cite[Lemma 3]{inducing_order_on_ca_by_grouping} the authors showed: If $\AA \in \SSS(\BB)$ for any local algebras $\AA$, $\BB$ of the same type, then $\AA^{[n]} \in \SSS(\BB^{[n]})$ for every $n\in\N$. Thus, $\Xi \SSS (\KK') \subseteq \SSS \Xi (\KK')$. In \cite{bulking2}, the authors outline the proof of $\Xi \HHH(\KK') \subseteq \HHH\Xi(\KK')$. We give an explicit proof here. 

Let $\AA = (S, f)_r$ and $\BB=(T, g)_r$ be such that $\AA \in \HHH(\BB)$ and let $n \in \N$. We show that $\AA^{[n]} \in \HHH(\BB^{[n]})$. Since $\AA \in \HHH(\BB)$, there is some surjective homomorphism $\varphi: T \twoheadrightarrow S$ of the local algebras. We extend it to a surjective mapping $\psi: T^n \twoheadrightarrow S^n$ simply by putting $\psi(\xx)_i = \varphi(x_i)$ for any $\xx \in T^n$ and $i \in \{1, \ldots, n\}$. Now it remains to verify that $\psi: \AA^{[n]} \rightarrow \BB^{[n]}$ is a surjective homomorphism of the algebras. First, let $m \geq 2r+1$ and $x_1, \ldots, x_m \in T$. We show that $\widetilde{f}(\widetilde{\varphi}(x_1\cdots x_m))=\widetilde{\varphi}( \widetilde{g}(x_1\cdots x_m))$. Clearly, for every $i \in \N$, it holds that $\widetilde{f}(\widetilde{\varphi}(x_1 \cdots x_m))_i = f(\varphi(x_{i-r}) \cdots \varphi(x_{i+r})) = \varphi(g(x_{i-r}, \ldots, x_{i+r})) = \widetilde{\varphi} (\widetilde{g}(x_1\cdots x_m))_i$. Further, by induction we obtain that for every $n \in \N$ and every sequence $x_1, \ldots, x_m \in T$, $m \geq 2nr+1$, it holds that $\widetilde{f}^n(\widetilde{\varphi}(x_1 \cdots x_m))=\widetilde{\varphi} (\widetilde{g}^n(x_1\cdots x_m))$. Next, let $\xx_{-r}, \ldots, \xx_r \in T^n$. We have:
$$f^{[n]}\big(\psi(\xx_{-r}), \ldots, \psi(\xx_r)\big) = \widetilde{f}^n\big(\widetilde{\varphi}(\xx_{-r}\cdots \xx_r)\big) = \widetilde{\varphi}\big(\widetilde{g}^n(\xx_{-r}\cdots \xx_r)\big) = \psi\big(g^{[n]}(\xx_{-r}, \ldots, \xx_r)\big).$$

Thus, indeed, $\psi: T^n \twoheadrightarrow S^n$ is a surjective homomorphism of algebras $\AA^{[n]}$ and $\BB^{[n]}$, and $\AA^{[n]} \in \HHH(\BB^{[n]})$. This already implies that $\Xi \HHH(\KK') \subseteq \HHH\Xi(\KK')$.

Now we are ready to finish the proof that $\HHH\SSS \Pfin \Xi(\KK)$ is already closed under $\HHH, \SSS, \Pfin$, and $\Xi$. We show this for the operator $\Xi$:
$$\Xi \HHH\SSS\Pfin \Xi(\KK) \subseteq \HHH \Xi \SSS \Pfin \Xi(\KK) \subseteq \HHH\SSS \Xi \Pfin \Xi (\KK) \subseteq \HHH\SSS \Pfin \Xi \Xi(\KK) = \HHH\SSS \Pfin \Xi(\KK)$$
where the first inclusion uses that $\Xi \HHH(\KK') \subseteq \HHH \Xi(\KK')$ for $\KK' = \SSS\Pfin \Xi(\KK)$ and the subsequent inclusions follow similar logic. For the other operators, the proof is analogous and takes less than one line, using the results summarized above. 
\end{proof}

\begin{cor}
    The CA simulation relation $\preceq$ is reflexive and transitive; i.e., it forms a preorder.
\end{cor}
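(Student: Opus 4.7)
The plan is to derive both properties directly from Proposition \ref{prop:hspx_is_closed}, which already does the heavy lifting. First I would record two simple facts about each of the operators $\HHH$, $\SSS$, $\Pfin$, $\Xi$: they are \emph{inflationary} on any class (so that $\KK \subseteq \OOO(\KK)$ for $\OOO$ any of the four) and \emph{monotone} with respect to class inclusion. Monotonicity is immediate from the definitions. Inflation is equally routine: $\AA$ is a quotient of itself via the equality congruence, a subalgebra of itself, a one-fold product of itself, and is isomorphic to $\AA^{[1]}$ since $S^1 = S$ and the definition gives $g^{[1]}(x_{-r}, \ldots, x_r) = \widetilde{g}(x_{-r} \cdots x_r) = g(x_{-r}, \ldots, x_r)$.

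For reflexivity, applying the inflationary operators $\Xi$, then $\Pfin$, then $\SSS$, then $\HHH$ to $\{\AA\}$ yields $\AA \in \HHH\SSS\Pfin\Xi(\AA)$, so $\AA \preceq \AA$. For transitivity, suppose $\AA \preceq \BB$ and $\BB \preceq \CC$, and set $\KK := \HHH\SSS\Pfin\Xi(\CC)$. From $\BB \in \KK$ and monotonicity, $\HHH\SSS\Pfin\Xi(\BB) \subseteq \HHH\SSS\Pfin\Xi(\KK)$. By Proposition \ref{prop:hspx_is_closed}, $\KK$ is closed under each of the four operators, and a one-line induction then gives $\HHH\SSS\Pfin\Xi(\KK) \subseteq \KK$. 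Therefore $\AA \in \HHH\SSS\Pfin\Xi(\BB) \subseteq \KK$, i.e., $\AA \preceq \CC$.

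The only minor subtlety is that Definition \ref{def:ca_simulation} formally requires a common radius; to apply the argument to three automata with possibly different radii $r_\AA, r_\BB, r_\CC$ one first reinterprets them at a common radius $r \geq \max\{r_\AA, r_\BB, r_\CC\}$ using the convention recorded after Definition \ref{def:ca_simulation}. I do not expect a genuine obstacle here: the real combinatorial content sits in Proposition \ref{prop:hspx_is_closed}, and the corollary is essentially bookkeeping, combining closure of $\HHH\SSS\Pfin\Xi(\CC)$ with the inflationary property of the four operators.
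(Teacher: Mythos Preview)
Your proposal is correct and follows essentially the same route as the paper: both reduce transitivity to Proposition \ref{prop:hspx_is_closed} via the chain $\AA \in \HHH\SSS\Pfin\Xi(\BB) \subseteq \HHH\SSS\Pfin\Xi\bigl(\HHH\SSS\Pfin\Xi(\CC)\bigr) = \HHH\SSS\Pfin\Xi(\CC)$, and both treat reflexivity as immediate. You are simply more explicit than the paper about the inflationary and monotone nature of the four operators, and your remark on reconciling radii is a harmless extra since the paper's own proof just assumes a common radius from the outset.
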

\begin{proof}
    The reflexivity of $\preceq$ is clear. We briefly discuss the transitivity. Let $\AA$, $\BB$, $\CC$ be local algebras of CAs all with radius $r \in \N$ such that $\AA \preceq \BB$ and $\BB \preceq \CC$. By definition, $\AA \in \HHH \SSS \Pfin \Xi (\BB)$ and $\BB \in \HHH \SSS \Pfin \Xi(\CC)$. Thus, $\AA \in \HHH \SSS \Pfin \Xi (\BB) \subseteq \HHH \SSS \Pfin \Xi \HHH \SSS \Pfin \Xi (\CC) = \HHH \SSS \Pfin \Xi (\CC)$ where the last equality holds due to Proposition \ref{prop:hspx_is_closed}. Hence, $\AA \preceq \CC$. 
\end{proof}

\begin{defn}
    Let $r \in \N$. We define $\SING_r$ to be the class of local algebras of the form $(S,f)_r$ where $|S|=1$. 
\end{defn}

Clearly, any two local algebras in $\SING_r$ are isomorphic. Moreover, let $\AA \in \SING_r$, and consider any local algebra $\BB = (S, f)_r$ of a CA. Then, $\AA \preceq \BB$ since $\AA \in \HHH(\BB)$. \chng{Lastly, it is clear that $\AA$ can only simulate elements of $\SING_r$.} Thus, $\SING_r$ forms the minimum element within the class of all 1D CAs with radius $r$.

We note that our definition of CA simulation combines all classical algebraic operators that preserve the finiteness of the algebras. \chng{Compared to previous approaches, we have added the operator $\Pfin$ to the definition. We highlight that the negative results in this paper also hold (as a trivial corollary) when considering CA simulation based just on the operators $\HHH, \SSS$ and $\Xi$. However, adding $\Pfin$ has various advantages, which emphasize the connection to the field of universal algebra: For any class of local algebras $\KK$ of CAs with radius $r$, $\HHH\SSS \Pfin \Xi(\KK)$ forms a so-called \emph{pseudovariety}. Thus, as shown in \cite{on_pseudovarieties}, $\HHH\SSS \Pfin \Xi(\KK)$ can be characterized by a sequence of equations. Finding such equation sequences for different classes of CAs is a challenging problem that can provide important insight into the structure of the simulation relation and is an interesting line of future work. Another advantage is that with the $\Pfin$ operator, the simulation relation $\preceq$ forms an upper semi-lattice as we show in the following proposition.}

\chng{
\begin{prop}
    Let $r \in \N$ be a fixed CA radius. For $\AA$, $\BB$ local algebras of CAs with radius $r$, we say that $\AA$ and $\BB$ are equivalent if $\AA \preceq \BB$ and $\BB \preceq \AA$. Let us consider the set $\mathrm{CA^*}_r$ consisting of all local algebras of CAs with radius $r$ up to the equivalence. Then, $(\mathrm{CA^*}_r, \preceq)$ forms an upper semi-lattice.
\end{prop}
\begin{proof}
    Let $\AA, \BB \in \mathrm{CA^*}_r$. We will show that $\AA \times \BB$ is the supremum of $\AA$ and $\BB$; i.e., that $\AA \preceq \AA \times \BB$, $\BB \preceq \AA \times \BB$ and that whenever there is a $\CC \in \mathrm{CA^*}_r$ such that $\AA \preceq \CC$, $\BB \preceq \CC$, then $\AA \times \BB \preceq \CC$. This will show that $(\mathrm{CA^*}_r, \preceq)$ is an upper semi-lattice.

    The fact that $\AA \preceq \AA \times \BB$ and $\BB \preceq \AA \times \BB$ is clear. Let us consider $\CC \in \mathrm{CA^*}_r$ such that $\AA \preceq \CC$, $\BB \preceq \CC$. Then there exists $\CC_1 \in \SSS \Pfin \Xi(\CC)$ such that $\AA \in \HHH(\CC_1)$. Similarly, there exists $\CC_2 \in \SSS \Pfin \Xi(\CC)$ such that $\BB \in \HHH(\CC_2)$. Then clearly $\AA \times \BB \in \HHH(\CC_1 \times \CC_2)$ and $\CC_1 \times \CC_2 \in \Pfin \SSS \Pfin \Xi(\CC) = \SSS \Pfin \Xi(\CC)$. Thus, $\AA \times \BB \in \HHH \SSS \Pfin \Xi (\CC)$ and $\AA \times \BB \preceq \CC$.
\end{proof}
Note that in \cite{bulking2}, the simulation relation is considered without the $\Pfin$ operator yet with more general geometrical transformations. In such a case, the authors show that their definition of CA simulation induces neither an upper, nor a lower semi-lattice.
}

\chng{\subsection{Bi-permutivity}}
\chng{
To finish this section, we introduce a largely studied class of CAs called bi-permutive and show a simple (well-known) negative result: bi-permutive CAs can only simulate other bi-permutive CAs. 
We first define the notions ``permutivity'' and ``insensitivity''. Informally, a CA with local rule $f$ is $i$-permutive if its $i$-th coordinate is a bijection. In contrast, the CA is insensitive to its $i$-th input if the output of $f$ does not depend on the $i$-th coordinate.
\vspace{3mm}
\begin{defn}[permutivity, insensitivity] \label{def:permutivity_insensitivity}
  Let $\A$ be a CA with local algebra $\AA=(S, f)_r$ and let $-r \leq i \leq r$. We say that $\AA$ (or $\A$) is $i$-permutive if $f(s_{-r}, \ldots, s_{i-1},~\cdot~, s_{i+1}, \ldots, s_r): S \rightarrow S \text{ is a bijection for all } s_{-r}, \ldots, s_{i-1}, s_{i+1}, \ldots, s_r \in S.$ 
  We say that $f$ is insensitive to its $i$-th input if
  $f(s_{-r}, \ldots, s_{i-1},~\cdot~, s_{i+1}, \ldots, s_r): S \rightarrow S \text{ is a constant mapping for all } s_{-r}, \ldots, s_{i-1}, s_{i+1}, \ldots, s_r \in S.$
\end{defn}

Suppose a CA with local algebra $\AA=(S, f)_r$ is insensitive to its $-r$-th and $r$-th input. Then, we can define a CA with smaller radius $r-1$ that has equivalent dynamics. 

Many of the results presented in this paper hold for affine CAs with at least two permutive coordinates. A special case of such CAs are the \emph{bi-permutive CAs} that are well studied in the literature \cite{topological_dynamics_of_cellular_automata}. For a bi-permutive CA, one requires the permutive coordinates to be the ``left-most'' and ``right-most'' ones. Below, we define the bi-permutivity for cellular automata in a slightly more general way compared to the classical definition to account for the fact that we only study cellular automata with a symmetrical neighbourhood given by a particular radius.

\begin{defn}[bijective condition, bi-permutivity] \label{def:bijectivity}
Let $\AA=(S,f)_r$ be a local algebra of a CA with radius $r$. We say that $\AA$ satisfies the \emph{bijective condition (witnessed by $i$ and $j$)} if there exist $-r \leq i < j \leq r$ such that $\AA$ is $i$-permutive and $j$-permutive.

If $\AA$ further satisfies that for all $k<i$ and $k>j$, $\AA$ is $k$-insensitive, we say that $\AA$ is \emph{bi-permutive}, or \emph{$(i,j)$-bi-permutive}. In such a case, we write that $\AA \in \BPR_{r;i,j}$.
\end{defn}

An important property of bi-permutivity is that this property is preserved by the CA iterative powers, as stated in the following lemma.

\begin{lemma} \label{lemma:ksi_permutivity}
    Let $r \in \N$ and $-r \leq i < j \leq r$. Let $\AA$ be a local algebra of an $(i,j)$-bi-permutive CA. Then, for each $n \in \N$, $\AA^{[n]}$ is also $(i,j)$-bi-permutive.
\end{lemma}
\begin{proof}
 This is a straightforward generalization of Lemma 3 from \cite{additive_cas_over_Zp_bottom}.  
\end{proof}

We note that Lemma \ref{lemma:ksi_permutivity} does not in general hold for CAs satisfying the bijective condition since the bijectivity of the ``inner local rule's coordinates'' need not be preserved by the CA's iterated powers. This is illustrated in the following example.

\begin{ex}\label{ex:counterexample_bijective_condition}
    Let $\AA = (\F_2^2, f)_2$ where $\F_2$ is the finite field with two elements and $f(\xx_{-2}, \xx_{-1}, \xx_0, \xx_1, \xx_2) = A_{-2} \xx_{-2} + A_{-1} \xx_{-1} + A_{0} \xx_{0} + A_{1} \xx_{1} + A_{2} \xx_{2}$ where:
    \begin{equation*} \label{ex:Ai_matrices}
     \equalto{A_{-2}}{\begin{pmatrix}
0 	& 1 \\
0 	& 0 \\
\end{pmatrix}}  \quad 
\equalto{A_{-1}}{\begin{pmatrix}
1 	& 1 \\
0 	& 1 \\
\end{pmatrix}}  \quad 
\equalto{A_{0}}{\begin{pmatrix}
1 	& 0 \\
1 	& 1 \\
\end{pmatrix}}  \quad 
 \equalto{A_{1}}{\begin{pmatrix}
0 		& 0 \\
1 		& 0 \\
\end{pmatrix}}  \quad 
 \equalto{A_{2}}{\begin{pmatrix}
0 		& 0 \\
0 		& 0 \\
\end{pmatrix}.}
\end{equation*}
It is easy to see that for $-r \leq i \leq r$, $\AA$ is $i$-permutive if and only if $A_i$ is an invertible matrix. Hence, $\AA$ is $-1$-permutive and $0$-permutive and thus, satisfies the bijective condition. However, $\AA$ is not bi-permutive since $\AA$ is $2$-insensitive but ``is sensitive'' to all other coordinates. It is straightforward to verify that $\AA^{[2]}=(\F_2^4, f^{[2]})_2$ with $f^{[2]}(\xx_{-2}, \xx_{-1}, \xx_0, \xx_1, \xx_2)=\vo$. Thus, $\AA^{[2]}$ is insensitive to every coordinate and clearly does not satisfy the bijective condition.
\end{ex}

We can now proceed to showing the simple negative result that bi-permutive CAs can only simulate bi-permutive CAs.

\begin{prop}
    Let $r \in \N$ and $-r \leq i < j \leq r$. Let $\AA, \BB$ be local algebras of cellular automata with radius $r$ such that $\BB$ is $(i,j)$-bi-permutive. If $\AA \preceq \BB$ then $\AA$ is $(i,j)$-bi-permutive.
\end{prop}
\begin{proof}
    From Lemma \ref{lemma:ksi_permutivity} we have that $\Xi(\BB) \subseteq \BPR_{r;i,j}$. Next, let $\BB_1, \ldots, \BB_k \in \BPR_{r;i,j}$. Then, clearly also $\BB_1 \times \cdots \times \BB_k \in \BPR_{r;i,j}$. Thus, $\Pfin\Xi(\BB) \subseteq \BPR_{r;i,j}$.

    Next, suppose that $\BB' \in \BPR_{r;i,j}$, $\BB' = (S', f')_r$ and let $\AA=(S, f)_r$ be a sub-automaton of $\BB'$ (so $S \subseteq S'$). We will show that $\AA \in \BPR_{r;i,j}$. Let $s_{-r}, \ldots, s_{i-1}, s_{i+1}, \ldots, s_r \in S$. Then the map $f'(s_{-r}, \ldots, s_{i-1}, \cdot, s_{i+1}, \ldots, s_r): S' \rightarrow S'$ is a bijection and hence, its restriction to $S$ is injective and thus, also bijective since $S$ is finite. Hence, $\AA$ is $i$-permutive. Analogously, we show that $\AA$ is $j$-permutive. Since insensitivity to the suitable coordinates is clear, we have that $\AA \in \BPR_{r;i,j}$ and therefore, $\SSS\Pfin\Xi(\BB) \subseteq \BPR_{r;i,j}$. 

    Similarly, suppose that $\BB' \in \BPR_{r;i,j}$, $\BB' = (S', f')_r$ and let $\AA=(S, f)_r$ be a quotient automaton of $\BB'$; $\AA = \bigslant{\BB'}{\sim}$. We will show that $\AA \in \BPR_{r;i,j}$. Let $[s_{-r}]_\sim, \ldots, [s_{i-1}]_\sim, [s_{i+1}]_\sim, \ldots, [s_r]_\sim \in S$. Then the map $f'(s_{-r}, \ldots, s_{i-1}, \cdot, s_{i+1}, \ldots, s_r): S' \rightarrow S'$ is a bijection and hence, is surjective. Thus, $f([s_{-r}]_\sim, \ldots, [s_{i-1}]_\sim, \cdot, [s_{i+1}]_\sim, \ldots, [s_r]_\sim): S \rightarrow S$ is also surjective, and since $S$ is finite, also a bijection. Thus, $\AA$ is $i$-permutive. Analogously, we show that $\AA$ is $j$-permutive. Thus, it is easy to see that $\AA \in \BPR_{r;i,j}$. Therefore, $\HHH\SSS\Pfin\Xi(\BB) \subseteq \BPR_{r;i,j}$ which concludes the proof.
\end{proof}

}

\section{Introducing Linear and Affine Automata} \label{section:affine_introduction}

\emph{\chng{Linear} automata} are a much-studied class of CAs. Studying the automata simulated by them leads naturally to a broader class of \emph{affine automata}, which we now introduce. 

\begin{defn} [affine CA, \chng{linear} CA]
    Let $\F$ be a finite field, $V$ a finite-dimensional vector space over $\F$, and let $\A = (V^\Z, F)$ be a CA with local algebra $(V, f)_r$. We say that $\AA$ (or $\A$) is \emph{affine over $\F$} if $f: V^{2r+1} \rightarrow V$ is an affine mapping between vector spaces over $\F$. In such a case, we can write $f$ in the following form:
    \begin{align} \label{def:affine_f}
        f(\xx_{-r}, \ldots, \xx_r) = f_{-r}(\xx_{-r}) + \cdots + f_r(\xx_r) + \cc,
    \end{align}
    where $f_i: V \rightarrow V$ is a linear mapping for each $-r \leq i \leq r$ and $\cc \in V$ is a constant vector. The mapping $f_i$ is called the \emph{$i$-th component of $f$}. The class of all local algebras isomorphic to some affine local algebra over $\F$ with radius $r$ is denoted as $\AFF^\F_r$.

    In the special case when $f: V^{2r+1} \rightarrow V$ is a linear mapping between vector spaces over $\F$, we say that $\AA$ (or $\A$) is \emph{linear over $\F$}. In such a case, we can write $f$ as in (\ref{def:affine_f}) with $\cc = \vo$. We denote the class of all local algebras isomorphic to some local algebra linear over $\F$ with radius $r$ as \chng{$\LIN^\F_r$}. We say that $\AA$ is \emph{canonical linear (or canonical affine)} if it is linear (or affine) over $\F_p$ for some prime $p$ and $V = \F_p$. 
\end{defn}

\begin{ex}
    Consider the elementary CA 150 with local algebra $\ECA_{150} = (\F_2, f)_1$ where $f: \F_2^3 \rightarrow \F_2$ is defined as $f(x,y,z)=x+y+z \bmod 2$. Then, $\ECA_{150}$ is a CA linear over $\F_2$; in fact, it is canonical linear.
\end{ex}

Automata linear over finite fields of the form $\F_p$, $p$ prime, have been widely studied. In fact, they form one of the few classes of cellular automata that are amenable to algebraic analysis which yields rigorous results about their global dynamics. To name a few important results not directly related to CA simulation, the properties of linear CAs' global dynamics have been carefully analysed in the seminal work \cite{algebraic_properties_of_ca} and consequently in \chng{\cite{exact_results_for_determinstic_cas, linear_cas_and_recurring_sequences, self_similarity_of_linear_cas, linear_cas_finite_automata_pascals_triangle, gutschow2010fractal}}. \chng{We note that canonical linear CAs have also been called ``additive'' in some works, such as \cite{additive_cas_over_Zp_bottom}.}

The class of affine automata is a natural generalization of linear CAs that occurs when we study sub-automata of linear CAs. As shown further in the next section, it is the case that sub-automata of linear CAs need not be linear but are in general affine. To get acquainted with the two notions, we start with a simple lemma and an example.

\begin{lemma} \label{lemma:linear_idempotent}
    Let $\F$ be a finite field and $r \in \N$. Let $\AA = (V, f)_r$ be such that $\AA \in \AFF^\F_r$. Then, $\AA \in \LIN^\F_r$ if and only if there exists an idempotent element of $f$; i.e., there exists $\vv \in V$ such that $f(\vv, \ldots, \vv)=\vv$.
\end{lemma}
\begin{proof}
    Each linear CA has $\vo$ as an idempotent element. In the other direction, let $\AA = (V, f)_r$ be such that $\AA \in \AFF^\F_r$, and suppose that $\vv \in V$ is an idempotent of $f$. Let $f_{-r}, \ldots, f_r$ be the components of $f$. We define the bijection $\varphi: V \rightarrow V$ as $\varphi(\vv + \xx) = \xx$ for every $\xx \in V$. Next, we define $g:V^{2r+1} \rightarrow V$ as $g(\xx_{-r}, \ldots, \xx_r) \coloneqq \sum_{i=-r}^r f_i(\xx_i)$. Then, the CA with local algebra $(V, g)_r$ is linear, and it is straightforward to verify that $(V, f)_r \cong (V, g)_r$ via $\varphi$; indeed, for any $\xx_{-r}, \ldots, \xx_r$ we have:
    \begin{align*}
        \varphi\big(f(\xx_{-r}+\vv, \ldots, \xx_r+\vv)\big) &= \varphi\bigl(f(\vv, \ldots, \vv)+\sum_{i=-r}^r f_i(\xx_i)\bigr) = \sum_{i=-r}^r f_i(\xx_i)=\\ 
        &= \sum_{i=-r}^r f_i(\varphi\big(\xx_i+\vv)\big)=g\big(\varphi(\xx_{-r}+\vv), \ldots, \varphi(\xx_{r}+\vv)\big). \qedhere
    \end{align*} 
\end{proof}

\begin{ex}
    $\LIN^{\F_2}_1 \subsetneq \AFF^{\F_2}_1$. An example of an affine CA that is not isomorphic to any linear one is the elementary CA 105 with local algebra $(\F_2, f)_1$ where $f(x,y,z)=(x+y+z+1) \bmod 2$ for any $x,y,z \in \F_2$. It is clear that neither 0 nor 1 are idempotent elements of $f$.
\end{ex}

\chng{
Many of the results presented in this paper hold for affine CAs with at least two components that are bijections. Recall Definition \ref{def:bijectivity} which introduced the bijective condition and bi-permutivity. Below, we review the notions in the context of affine CAs.

\begin{obs}
Let $\AA=(V,f)_r$ be an affine local algebra of a CA with radius $r$ whose local rule $f$ has components $f_{-r}, \ldots, f_r$. $\AA$ satisfies the bijective condition if and only if at least two of the components are bijections. 
Let $-r \leq i < j \leq r$. $\AA$ is $(i,j)$-bi-permutive if and only if $f_i$ and $f_j$ are bijections, and for all $k< i$ and $k > j$, $f_k$ is the constant $\vo$ mapping.
\end{obs} 

When studying general CAs, bi-permutivity is a very restrictive condition \cite{topological_dynamics_of_cellular_automata}. As an example, it excludes the interesting class of reversible automata since bi-permutive CAs cannot be injective. However, the condition is not so confining for affine CAs. There, bi-permutivity means that the CA's ``outer components'' are represented by invertible matrices, and a random matrix is indeed much more likely invertible than singular. Moreover, Example \ref{ex:counterexample_bijective_condition} shows that bi-permutivity is an important restriction: the (weaker) bijective condition need not be preserved by the iterative powers.

\begin{defn}
Let $\AA=(V,f)_r$ be an affine local algebra of a CA with radius $r$ and let $-r \leq i < j \leq r$. We write that $\AA \in \AFF^\F_{r;i,j}$ if $\AA \in \AFF^\F_{r}$ and at the same time $\AA \in \BPR_{r;i,j}$. Analogously, we define the subclass $\LIN^\F_{r;i,j}$.
\end{defn}
}

Now we can state the main result of this paper that we prove in Section \ref{section:affine_limitations}:

\begin{thm}\label{thm:main_result}
    Let $p$ be a prime, $r \in \N$, and $-r \leq i < j \leq r$. Let $\AA$, $\BB$ be local algebras of cellular automata with radius $r$ such that $\BB \in \AFF^{\F_p}_{r;i,j}$. If $\AA \preceq \BB$, then $\AA \in \AFF^{\F_p}_{r;i,j}$.
\end{thm}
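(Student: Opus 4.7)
The plan is to show that the class $\AFF^{\F_p}_{r;i,j}$ is closed under each of the four operators $\HHH$, $\SSS$, $\Pfin$, $\Xi$. Since $\AA \preceq \BB$ unfolds to $\AA \in \HHH\SSS\Pfin\Xi(\BB)$, starting from $\BB \in \AFF^{\F_p}_{r;i,j}$ and successively applying these closure claims yields $\AA \in \AFF^{\F_p}_{r;i,j}$.

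The cases $\Pfin$ and $\Xi$ are largely mechanical. A finite product is affine componentwise, and componentwise products of linear bijections remain bijections, so $\Pfin$ preserves the class. For $\Xi$, composing affine maps stays affine, so $g^{[n]}$ is affine. A light-cone argument gives $g^{[n]}_k = \vo$ for $k < i$ or $k > j$: the effective neighborhood of $g$ is $[i,j]$, so after $n$ steps it becomes $[ni,nj]$, which sits entirely inside the positions packaged into $\xx_i, \ldots, \xx_j$. Writing $g^{[n]}_i$ as an $n \times n$ block matrix of linear maps $V \to V$, the entry at row $s$, column $t$ encodes the shift $ni + (t-s)$; entries with $t < s$ vanish (shifts strictly less than $ni$ are unrealizable), while the diagonal entries equal $f_i^n$ (the unique all-$i$ path). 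Since $f_i$ is bijective so is $f_i^n$, and a block-triangular matrix with invertible diagonal blocks is invertible; hence $g^{[n]}_i$ is bijective. Symmetrically, $g^{[n]}_j$ is lower-triangular with $f_j^n$ on the diagonal and bijective.

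The heart of the proof lies in $\SSS$ and $\HHH$, both of which critically exploit the bijective condition. For $\SSS$, let $W \subseteq V$ be a subalgebra; pick any $w_0 \in W$ and set $w_0^* = g(w_0, \ldots, w_0) \in W$, $W_0 = W - w_0$, $W_0^* = W - w_0^*$. Evaluating $g$ with $w_1 \in W$ at position $i$ and $w_0$ elsewhere in $[i,j]$ gives $g_i(w_1 - w_0) + w_0^* \in W$, so $g_i(W_0) \subseteq W_0^*$; analogously $g_j(W_0) \subseteq W_0^*$. Placing $w_1$ at $i$ and $w_2$ at $j$ simultaneously yields $g_i(W_0) + g_j(W_0) \subseteq W_0^*$. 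Since $g_i, g_j$ are injective and $|W_0| = |W_0^*|$, both inclusions $g_i(W_0), g_j(W_0) \subseteq W_0^*$ are equalities, so $W_0^* + W_0^* \subseteq W_0^*$. Combined with $\vo \in W_0^*$, finiteness, and characteristic $p$, this makes $W_0^*$ an $\F_p$-subspace of $V$. Conjugating $g$ by the translation $w \mapsto w - w_0^*$ produces an affine rule on $W_0^*$ whose components are restrictions of the $g_k$; the outer components remain bijective on $W_0^*$, so $\AA \in \AFF^{\F_p}_{r;i,j}$.

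For $\HHH$, let $\sim$ be a congruence on $\BB$ and set $U = [\vo]_\sim$. For $a \in U$, applying the congruence at position $i$ (with $a$ vs.\ $\vo$) gives $g_i(a) + C \sim C$, where $C$ captures the contribution of the other positions and sweeps all of $V$ as the $j$-th argument varies (by bijectivity of $g_j$). Setting $C = \vo$ gives $g_i(U) \subseteq U$, hence $= U$ by bijectivity; for arbitrary $C$, we obtain that adding any element of $U$ preserves $\sim$-classes. Conversely, if $v \sim v'$, the congruence at position $i$ with $C = -g_i(v')$ yields $g_i(v-v') \sim \vo$, so $g_i(v-v') \in U$ and hence $v - v' \in U$ via the bijection $g_i$ on $U$. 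Thus $\sim$ is exactly the relation $v \sim v' \iff v - v' \in U$; the congruence property at each position $k \in [i,j]$ forces $g_k(U) \subseteq U$, so $U$ is an $\F_p$-subspace stable under all $g_k$. The quotient $V/U$ inherits an affine local rule with bijective outer components, landing in $\AFF^{\F_p}_{r;i,j}$. The main obstacle throughout is the careful leveraging of the two bijective components: without them, neither the cardinality argument in $\SSS$ nor the $V$-sweeping argument in $\HHH$ succeeds, and subalgebras or quotients of affine CAs can indeed fail to be affine.
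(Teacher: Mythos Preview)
Your proposal is correct and follows essentially the same route as the paper: establish closure of $\AFF^{\F_p}_{r;i,j}$ under each of $\Xi$, $\Pfin$, $\SSS$, $\HHH$ separately, with the two bijective components driving the cardinality argument for $\SSS$ and the ``sweeping'' argument for $\HHH$. The only notable difference is that your treatment of $\Xi$ (the block-triangular analysis of $g^{[n]}_i$ with diagonal $f_i^n$) is more self-contained than the paper's, which outsources permutivity preservation to a lemma in \cite{additive_cas_over_Zp_bottom}; for $\SSS$ you should make explicit that $W_0 = W_0^*$ (which follows since $w_0 - w_0^* \in W - w_0^* = W_0^*$) so that $g_i(W_0^*) = W_0^*$ genuinely gives a bijective outer component on the translated subalgebra.
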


\subsection{Related Work on CA Simulation} \label{subsection:related_work}

In their seminal work \cite{inducing_order_on_ca_by_grouping}, Mazoyer and Rapaport study the properties of a strongly related notion of CA simulation based on iterative powers and sub-automata. In \cite{additive_cas_over_Zp_bottom}, they specifically focus on showing that a particular class of canonical linear CAs is limited in terms of what it can simulate. Subsequently, their work was continued in the collaboration \cite{bulking1}, where the authors introduce a generalized version of the iterative powers of CAs that we denote here by $\widetilde{\Xi}$ for clarity. Informally, $\widetilde{\Xi}$ allows for much more general geometrical transformations of the CA space-time diagrams. Whereas $\Xi$ is based on grouping together blocks of consecutive ``cells'' via the operators $o_n$, $\widetilde{\Xi}$ allows for groupings of much more general patterns, as well as, e.g., shifts of the CA's configuration space. In their subsequent work \cite{bulking2}, the same authors introduce various notions of CA simulation; the most general one is roughly defined below.

\begin{defn} [CA Simulation: Delorme, Mazoyer, Ollinger, Theyssier \cite{bulking2}]
    Let $\A$ and $\B$ be two 1D CAs both with radius $r \in \N$. For their local algebras, we write $\AA \preceq_\mathrm{m} \BB$ if $\widetilde{\Xi}(\AA) \cap \HHH\SSS\widetilde{\Xi}(\BB) \neq \emptyset$.
\end{defn}

Both lines of work (\cite{inducing_order_on_ca_by_grouping, additive_cas_over_Zp_bottom} and \cite{bulking1, bulking2}) analyse various important properties of the different CA simulations. This includes studying the simulation limitations of a special family of linear automata, namely those with local algebras of the form $\ZC_p = (\F_p, f)_1$ where $f(x, y, z)=(x+y+z) \bmod p$. Their negative result is that whenever $p \neq q$, one has $\ZC_p \npreceq_\mathrm{m} \ZC_q$. The authors use two observations:
\begin{enumerate}
    \item Let $\AA$, $\BB$ be local algebras of CAs with radius $r$. Then, $\widetilde{\Xi}(\AA) \cap \HHH\SSS\widetilde{\Xi}(\BB) \neq \emptyset$ if and only if $\widetilde{\Xi}(\AA) \cap \HHH\SSS\Xi(\BB) \neq \emptyset$.
    \item Let $\AA$ be a local algebra of a CA with $p$ states. Then the algebras in $\widetilde{\Xi}(\AA)$ have $p^k$ states, $k \geq1$.
\end{enumerate}
Given the observations, the authors reduce the problem of showing $\ZC_p \npreceq_\mathrm{m} \ZC_q$ for $p \neq q$ to proving that any algebra in $\HHH\SSS\Xi(\ZC_q)$ has $q^l$ elements for some $l \geq 0$. Assuming this extra piece of knowledge, which we deduce below as a corollary of our work, comparing the sizes of algebras in $\HHH\SSS\Xi(\ZC_q)$ and in $\widetilde{\Xi}(\ZC_p)$ already implies the negative result, as $p^k \ne q^l$ for $k\ge 1$, $l \geq 0$.

The crucial information about possible sizes of algebras in $\HHH\SSS\Xi(\ZC_q)$ follows immediately from our results. (This is not surprising, since Theorem \ref{thm:main_result} provides a lot of information about the structure of these algebras.) Indeed: For every prime $q$, clearly $\ZC_q \in \LIN^{\F_q}_{1;-1,1}$. By Theorem \ref{thm:main_result}, any local algebra in $\HHH\SSS\Xi(\ZC_q)$ belongs to $\AFF^{\F_q}_{1;-1,1}$; therefore it has an underlying structure of a vector space over $\F_q$ and hence, it has $q^l$ elements.

%Below, we briefly discuss how Theorem \ref{thm:main_result} immediately implies that $\ZC_p \npreceq_\mathrm{m} \ZC_q$ for $p \neq q$.
%
%For every prime $q$, clearly $\ZC_q \in \LIN^{\F_q}_{1;-1,1}$. Let $p, q$ be distinct primes. Theorem \ref{thm:main_result} implies that $\HHH\SSS\Xi(\ZC_q) \subseteq \AFF^{\F_q}_{1;-1,1}$. Thus, any local algebra in $\HHH\SSS\Xi(\ZC_q)$ has $q^l$ elements, $l \geq 0$. In contrast, all local algebras from $\widetilde{\Xi}(\ZC_p)$ have $p^k$ elements, $k \geq 1$. Thus, just comparing the algebra sizes implies that $\widetilde{\Xi}(\ZC_p) \cap \HHH\SSS\Xi(\ZC_q) = \emptyset$, and thus $\ZC_p \npreceq_\mathrm{m} \ZC_q$.

\section{Simulation Limitations of Linear and Affine Automata} \label{section:affine_limitations}

In this section, we prove Theorem \ref{thm:main_result}. In more detail, we will show that for any finite field $\F_p$, $p$ prime, any $r \in \N$, and any $-r \leq i < j \leq r$, it holds that:
\begin{align}
    \HHH\SSS\Pfin \Xi(\AFF^{\F_p}_{r;i,j})&=\AFF^{\F_p}_{r;i,j} \label{res:aff},\\
    \HHH\SSS\Pfin \Xi(\LIN^{\F_p}_{r;i,j})&=\AFF^{\F_p}_{r;i,j} \label{res:add}.
\end{align}
Theorem \ref{thm:main_result} is then a direct consequence of (\ref{res:aff}). Concretely, we will show this result by studying how the operators $\Xi$, $\Pfin$, $\SSS$, and $\HHH$ change the sets $\AFF^{\F_p}_{r;i,j}$ and $\LIN^{\F_p}_{r;i,j}$.

\begin{lemma}\label{lemma:ksi_affinity}
    Let $\AA$ be a local algebra of a CA that is affine (or linear) over a finite field $\F$ and let $n \in \N$. Then $\AA^{[n]}$ is again affine (or linear) over $\F$.  
\end{lemma}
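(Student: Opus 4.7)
The plan is to reduce the claim to the elementary fact that compositions of affine maps between finite-dimensional $\F$-vector spaces are again affine (and compositions of linear maps are linear).

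First I would observe that the unravelled map $\widetilde{f}: V^k \to V^{k-2r}$ is itself an affine map of $\F$-vector spaces (with the product vector space structure) for every $k \geq 2r+1$. Indeed, writing $f(\xx_{-r}, \ldots, \xx_r) = \sum_{i=-r}^r f_i(\xx_i) + \cc$, each coordinate of $\widetilde{f}(u_1 \cdots u_k)$ is $\sum_{i=-r}^{r} f_i(u_{\cdot}) + \cc$, which is evidently an affine function of the tuple $(u_1,\ldots,u_k) \in V^k$. If $f$ is additive (so $\cc=\vo$), then $\widetilde f$ is even linear.

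Next I would iterate: since the class of affine (respectively linear) maps between $\F$-vector spaces is closed under composition, the map
\[
\widetilde{f}^n: V^{2nr+1} \longrightarrow V
\]
obtained by $n$-fold composition is affine (respectively linear). Now the natural bijection $\varphi: (V^n)^{2r+1} \to V^{n(2r+1)}$ which rearranges $(\xx_{-r},\ldots,\xx_r)$ into the concatenation $x_{-r}^1\cdots x_{-r}^n \cdots x_r^1\cdots x_r^n$ is a linear isomorphism of $\F$-vector spaces. By the definition of $f^{[n]}$, the first coordinate of $f^{[n]}(\xx_{-r},\ldots,\xx_r)$ is $\widetilde{f}^n \circ \varphi$ applied to $(\xx_{-r},\ldots,\xx_r)$, and the remaining coordinates are analogous sliding-window expressions, each an affine (linear) composite. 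Hence every coordinate of $f^{[n]}: (V^n)^{2r+1} \to V^n$ is affine (linear) over $\F$, so $f^{[n]}$ itself is an affine (linear) map.

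Finally, to match the explicit form in the definition of affine CA, I would invoke the standard decomposition of an affine map on a product: any affine $\Phi: W_{-r}\times\cdots\times W_r \to U$ can be written as $\Phi(\ww_{-r},\ldots,\ww_r) = \sum_{i=-r}^r \Phi_i(\ww_i) + \dd$ where $\dd = \Phi(\vo,\ldots,\vo)$ and $\Phi_i(\ww) = \Phi(\vo,\ldots,\ww,\ldots,\vo) - \dd$ is linear. Applying this to $\Phi = f^{[n]}$ on $W_i = V^n$ exhibits the components $h_i = \Phi_i$ and constant $\dd$ witnessing $\AA^{[n]} \in \AFF^{\F}_r$. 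In the additive case $\cc=\vo$ forces $f^{[n]}(\vo,\ldots,\vo)=\vo$, so $\dd=\vo$ and $\AA^{[n]} \in \ADD^{\F}_r$.

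The only real subtlety is the last bookkeeping step: separating the affine map $f^{[n]}$ into per-argument linear components so that it literally fits the template in equation (\ref{def:affine_f}). Everything else is the observation that the iteration $\widetilde{f}^n$ never leaves the category of affine maps of $\F$-vector spaces, which requires no calculation beyond noting that $\widetilde{f}$ is coordinatewise affine.
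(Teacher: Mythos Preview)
Your argument is correct and in spirit very close to the paper's, but the route for the affine case differs slightly. In the paper, the additive case is handled exactly as you do (observe $\widetilde{g}\!\mid_{V^k}$ is linear, compose), but the affine case is then reduced to the additive one: one subtracts the constant $\cc$ to pass to an additive rule $g$, and proves by induction the identity $\widetilde{f}^n(\xx_1\cdots\xx_k)=\widetilde{g}^n(\xx_1\cdots\xx_k)+\widetilde{f}^{\,n-1}(\cc\cdots\cc)$, which exhibits the constant part of $f^{[n]}$ explicitly. You instead invoke directly that affine maps between $\F$-vector spaces are closed under composition, and only at the very end split the resulting affine $f^{[n]}$ into per-argument components via $\Phi_i(\ww)=\Phi(\vo,\ldots,\ww,\ldots,\vo)-\Phi(\vo,\ldots,\vo)$. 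Your version avoids the inductive bookkeeping and is arguably cleaner; the paper's version has the advantage of giving an explicit formula for the constant term $\dd$.

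One small wording issue, not a real gap: you write $\widetilde{f}^n: V^{2nr+1}\to V$ and then compose with $\varphi:(V^n)^{2r+1}\to V^{n(2r+1)}$, which does not type-check as stated. What you mean (and your ``sliding-window'' remark makes clear) is that each of the $n$ output coordinates of $f^{[n]}$ is $\widetilde{f}^n$ applied to the appropriate length-$(2nr{+}1)$ window of $\varphi(\xx_{-r},\ldots,\xx_r)$; equivalently, $f^{[n]}=\widetilde{f}^n\!\mid_{V^{n(2r+1)}}\circ\,\varphi$ with $\widetilde{f}^n\!\mid_{V^{n(2r+1)}}: V^{n(2r+1)}\to V^n$. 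With that clarification the argument goes through.
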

\begin{proof}
    Let $\AA=(V, f)_r$ where $f$ has linear components $f_{-r}, \ldots, f_r$ and $f(\xx_{-r}, \ldots, \xx_r) = f_{-r}(\xx_{-r}) + \cdots + f_r(\xx_r) + \cc$ for some constant $\cc \in V$; $\xx_{-r}, \ldots, \xx_r \in V$. We put $g(\xx_{-r}, \ldots, \xx_r) \coloneqq f(\xx_{-r}, \ldots, \xx_r) - \cc$ and $\BB \coloneqq (V, g)_r$. Clearly, $\BB$ is linear over $\F$. Let $n \in \N$; we first show that $\BB^{[n]}$ is also linear. It is straightforward to verify that for each $k \geq 2r+1$, $\restr{\widetilde{g}}{V^k}: V^k \rightarrow V^{k-2r}$ is a linear mapping of vector spaces. Thus, since composition of linear mappings is linear, we have that $\restr{\widetilde{g}^n}{V^{n(2r+1)}}: V^{n(2r+1)} \rightarrow V^n$ is also linear. It is then a straightforward yet slightly technical step to verify that $g^{[n]}$ is a linear mapping and thus, that $\BB^{[n]}$ is linear over $\F$.

    Next, we observe that $\widetilde{f}(\xx_1 \cdots \xx_{k})=\widetilde{g}(\xx_1 \cdots \xx_{k})+\underbrace{\cc \cc \cdots \cc\cc}_{(k-2r) \times}$ and that $\widetilde{f}\big((\xx_1 \cdots \xx_{k}) + (\yy_1 \cdots \yy_{k})\big)=\widetilde{g}(\xx_1 \cdots \xx_{k})+\widetilde{f}(\yy_1 \cdots \yy_{k})$ for any $k \geq 2r+1$ and any $\xx_1, \ldots, \xx_k, \yy_1, \ldots, \yy_k \in V$.
    Combining these two facts, it is straightforward to verify by induction on $n$ that for any $k \geq 2nr+1$, we have 
    $$\widetilde{f}^n(\xx_1 \cdots \xx_{k}) = \widetilde{g}^n(\xx_1 \cdots \xx_{k}) + \widetilde{f}^{n-1}(\underbrace{\cc \cc \cdots \cc\cc}_{(k-2r) \times})$$
    for each $\xx_1, \ldots, \xx_{k} \in V$. For $k=n(2r+1)$ we get that $f^{[n]}(\cdot) = g^{[n]}(\cdot) + \dd$ for a constant $\dd \in V^n$ and thus, that $\AA^{[n]}$ is affine over $\F$.
\end{proof}

\begin{cor} \label{cor:xi(aff)=aff} Let $\F$ be an arbitrary finite field, $r \in \N$, and $-r \leq i < j \leq r$. Then:
\begin{equation*}
\begin{aligned}[c]
\Xi(\AFF_r^\F)&=\AFF_r^\F,\\
\Xi(\AFF_{r;i,j}^\F)&=\AFF_{r;i,j}^\F,
\end{aligned}
\qquad \qquad
\begin{aligned}[c]
\Xi(\LIN_r^\F)&=\LIN_r^\F,\\
\Xi(\LIN_{r;i,j}^\F)&=\LIN_{r;i,j}^\F.
\end{aligned}
\end{equation*}
\end{cor}
\chng{
\begin{proof}
    This is a direct consequence of Lemma \ref{lemma:ksi_affinity} together with Lemma \ref{lemma:ksi_permutivity}.
\end{proof}
}

\begin{obs}
    Let $\AA_1, \ldots, \AA_k$ be local algebras of CAs with radius $r$ that are affine (or linear) over a finite field $\F$. Then, $\AA = \AA_1 \times \cdots \times \AA_k$ is again affine (or linear) over $\F$. Moreover, let $-r \leq i < j \leq r$. If the algebras $\AA_1, \ldots, \AA_k$ all have their $i$-th and $j$-th components bijective, then so does $\AA$.
\end{obs}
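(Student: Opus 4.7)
The plan is to exhibit an explicit affine decomposition of the product local rule, obtained componentwise from the decompositions of the factors, and then read off both the affine structure and the bijectivity of the relevant components. Write $\AA_m = (V_m, f_m)_r$ for $1 \leq m \leq k$, with
$$f_m(\xx_{-r}, \ldots, \xx_r) = \sum_{i=-r}^r f_{m,i}(\xx_i) + \cc_m,$$
where each $f_{m,i}: V_m \to V_m$ is $\F$-linear and $\cc_m \in V_m$. The state space of the product $\AA$ is $V := V_1 \times \cdots \times V_k$, which inherits a natural $\F$-vector space structure via componentwise addition and scalar multiplication; this is the first observation I would record.

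Next, for each $-r \leq i \leq r$, I would define $f_i: V \to V$ componentwise by $f_i(\vv)_m := f_{m,i}(\vv_m)$. Because each $f_{m,i}$ is linear, $f_i$ is a linear endomorphism of $V$. Setting $\cc := (\cc_1, \ldots, \cc_k) \in V$, a direct componentwise unwinding of the definition of the product local rule gives, for any $\vv_{-r}, \ldots, \vv_r \in V$,
$$f(\vv_{-r}, \ldots, \vv_r) = \sum_{i=-r}^r f_i(\vv_i) + \cc,$$
which is precisely the affine form required by equation (\ref{def:affine_f}). This shows $\AA \in \AFF_r^\F$, with the $i$-th component of $f$ being exactly $f_i$. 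In the additive case each $\cc_m = \vo$, hence $\cc = \vo$, and the same computation shows that $\AA$ is additive.

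For the second claim, I would simply note that a componentwise map $f_i$ built from maps $f_{m,i}$ is a bijection if and only if every $f_{m,i}$ is. Applying this to the two indices $i$ and $j$ for which, by hypothesis, every factor has the relevant component bijective, yields that the corresponding components of $f$ are bijections too, as desired. I do not anticipate any real obstacle here: the whole statement amounts to verifying that the standard direct-product constructions (of vector spaces, of linear/affine maps, of bijections) interact in the expected way, with no coupling between different coordinates of the product.
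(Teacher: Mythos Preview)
Your proof is correct and follows exactly the approach the paper has in mind; indeed, the paper records this statement as an observation without proof, and your componentwise verification is precisely the routine check being left to the reader. The only cosmetic issue is the reuse of the letter $i$ both as the running index in the affine decomposition and as one of the two distinguished indices in the bijectivity claim, which you may want to disambiguate.
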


\begin{cor} \label{cor:pfin(aff)=aff} Let $\F$ be an arbitrary finite field, $r \in \N$, and $-r \leq i < j \leq r$. It holds that:
\begin{equation*}
\begin{aligned}[c]
\Pfin(\AFF_r^\F)&=\AFF_r^\F,\\
\Pfin(\AFF_{r;i,j}^\F)&=\AFF_{r;i,j}^\F,
\end{aligned}
\qquad \qquad
\begin{aligned}[c]
\Pfin(\LIN_r^\F)&=\LIN_r^\F,\\
\Pfin(\LIN_{r;i,j}^\F)&=\LIN_{r;i,j}^\F.
\end{aligned}
\end{equation*}
\end{cor}

Thus, we have shown that both the operators $\Xi$ and $\Pfin$ preserve the class of $\AFF^\F_{r;i,j}$ for each finite field $\F$, each radius $r \in \N$, and each $-r \leq i < j \leq r$. Below, we show similar results for the operators $\SSS$ and $\HHH$ under the assumption that the affine automata satisfy the bijective condition. In what follows, $p$ always denotes a prime number.

\subsection{Sub-automata of affine CAs} \label{subsection:subautomata_of_affine_cas}
In this section, we study the operator $\SSS$ on the class of affine automata. Compared to the part about $\Pfin$ and $\Xi$, our results need more assumptions: We only work over the prime fields $\F_p$ and we require the bijective condition (see Definition \ref{def:bijectivity}). The importance of the assumptions is illustrated in Example \ref{ex:counterexample_s} where we exhibit an affine CA violating the bijective condition that contains a sub-automaton that is not affine.

We start by noticing that certain invariant subspaces produce a natural family of sub-automata.

\begin{obs}
    Let $\BB=(V, f)_r$ be the local algebra of a CA affine over a \chng{finite field $\F$}. Suppose that $W \leq V$ is a subspace invariant under all components of $f$ and that $\vv \in V$ satisfies $f(\vv, \ldots, \vv) \in \vv +W$. Then $\AA = (\vv+W, \restr{f}{(\vv+W)^{2r+1}})_r$ belongs to $\SSS(\BB)$. 
\end{obs}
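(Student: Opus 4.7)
The plan is to reduce the statement to a single closure check. By Definition~\ref{def:basic_local_simulations}, once one verifies that $\vv+W$ is closed under $f$, the restriction $\restr{f}{(\vv+W)^{2r+1}}$ takes values in $\vv+W$, so $\AA$ is literally a subalgebra of $\BB$ of the same type $2r+1$; membership in $\SSS(\BB)$ is then immediate (with the identity inclusion witnessing it in Observation~\ref{obs:local_global_relation}.\ref{obs:sub}).

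The closure check itself is the whole content of the observation. I would pick an arbitrary input $\vv+\ww_{-r},\ldots,\vv+\ww_r$ with $\ww_i \in W$ and expand using the affine decomposition $f(\xx_{-r},\ldots,\xx_r)=\sum_{i=-r}^r f_i(\xx_i) + \cc$. By linearity of each component, the sum splits as
\[
f(\vv+\ww_{-r},\ldots,\vv+\ww_r) \;=\; \Bigl(\sum_{i=-r}^r f_i(\vv) + \cc\Bigr) \;+\; \sum_{i=-r}^r f_i(\ww_i) \;=\; f(\vv,\ldots,\vv) + \sum_{i=-r}^r f_i(\ww_i).
\]
The first summand lies in $\vv+W$ by the standing hypothesis on $\vv$. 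For the second, invariance of $W$ under each $f_i$ gives $f_i(\ww_i)\in W$, and since $W$ is a linear subspace the total is again in $W$. Adding yields an element of $(\vv+W)+W=\vv+W$, finishing the closure argument.

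There is no serious obstacle; the only conceptual point worth flagging is that the two hypotheses play cleanly separated roles -- invariance of $W$ under each $f_i$ absorbs the $\ww$-dependent part, while the condition $f(\vv,\ldots,\vv)\in\vv+W$ absorbs the constant shift contributed by $\cc$ and by the images $f_i(\vv)$. Note in particular that neither the finiteness of the field nor the specific choice $\F=\F_p$ is invoked: the observation holds for any affine CA over any field, and the prime-field assumption is inherited only from the standing convention of the subsection that prepares the ground for the more delicate $\SSS$-calculations to follow.
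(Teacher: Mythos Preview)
Your proposal is correct; it is exactly the natural closure check that the paper leaves implicit by labeling the statement an observation without proof. Your side remark that neither finiteness nor $\F=\F_p$ is used is also accurate and worth noting.
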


Assuming the bijective condition, the observation can be turned into an equivalence -- every sub-automaton is of the simple form described above.

\begin{prop} \label{lemma:affine_subautomata}
    Let $\B$ be a CA with local algebra $\BB=(V,f)_r$ that is affine over $\F_p$ and satisfies the bijective condition. Let $\A$ be a sub-automaton of $\B$ with the local algebra $\AA=(U, \restr{f}{U^{2r+1}})_r$. Then, $U = \vv + W$ for a subspace $W \leq V$ invariant under all components of $f$ and $\vv\in\V$ such that $f(\vv, \ldots, \vv) \in \vv + W$.
\end{prop}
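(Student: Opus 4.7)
The plan is to exploit affinity of $f$ to translate the condition that $U$ be closed under $f$ into a tractable linear-algebraic statement. Concretely, I would fix an arbitrary $\vv_0 \in U$, define $W_0 \coloneqq U - \vv_0$ and $\dd_0 \coloneqq f(\vv_0, \ldots, \vv_0) - \vv_0$, and expand the affine form of $f$ to rewrite the $f$-invariance of $U$ as
\begin{equation*}
    \dd_0 + \sum_{k=-r}^{r} f_k(w_k) \in W_0 \quad \text{for all } w_{-r}, \ldots, w_r \in W_0.
\end{equation*}
Plugging in $w_k = \vo$ for all $k$ yields $\dd_0 \in W_0$; letting only the $k$-th coordinate vary gives $f_k(W_0) + \dd_0 \subseteq W_0$, i.e.\ $f_k(W_0) \subseteq W_0 - \dd_0$ for every $k$.

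The first key step is to extract a subspace of $V$ from this data. For every bijective component $f_k$, the inclusion $f_k(W_0) \subseteq W_0 - \dd_0$ must actually be an equality by a cardinality argument, since $|f_k(W_0)| = |W_0| = |W_0 - \dd_0|$. Applying this to the two bijective components $f_i, f_j$ promised by the bijective condition and varying $w_i$ and $w_j$ independently (with the other coordinates $\vo$), I would deduce that $a + b \in W_0 - \dd_0$ for every $a, b \in W_0 - \dd_0$. Since $\vo \in W_0 - \dd_0$ (because $\dd_0 \in W_0$) and the ambient field is $\F_p$, any additively closed subset containing $\vo$ is automatically an $\F_p$-subspace; hence $W \coloneqq W_0 - \dd_0$ is a subspace of $V$.

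With $W$ in hand, I would recenter at $\vv \coloneqq \vv_0 + \dd_0 = f(\vv_0, \ldots, \vv_0)$, so that $U = \vv + W$ and $\vv \in U$; closure of $U$ under $f$ then immediately delivers $f(\vv, \ldots, \vv) \in U = \vv + W$, which is the required condition on $\vv$. What remains is invariance of $W$ under \emph{every} component $f_k$, not just the two bijective ones. For this I would substitute $w_k = \dd_0 + x$ with $x \in W$ (legitimate because $\dd_0 + x \in W_0$) into the single-coordinate relation and obtain $f_k(\dd_0) + f_k(x) \in W$; setting $x = \vo$ first gives $f_k(\dd_0) \in W$, after which subtracting inside the subspace $W$ yields $f_k(x) \in W$ for every $x \in W$.

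The main obstacle --- and the only step that really uses the bijective condition --- is the promotion of the one-sided inclusions $f_k(W_0) \subseteq W_0 - \dd_0$ to genuine additive closure: bijectivity of two components is precisely what turns two such inclusions into equalities, which then combine to give closure under addition. Example~\ref{ex:counterexample_s}, foreshadowed just before the statement, confirms that without this hypothesis the conclusion genuinely fails, so no lighter argument can replace this cardinality step.
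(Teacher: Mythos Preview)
Your proof is correct and follows essentially the same approach as the paper: both use the bijective condition plus a cardinality argument to upgrade inclusions to equalities and thereby obtain additive closure of a translate of $U$, then invoke that over $\F_p$ an additively closed set containing $\vo$ is a subspace, and finally verify invariance under all components together with the condition on $\vv$. The only difference is organizational---you translate $U$ by a base point at the outset and work with the single relation $\dd_0 + \sum_k f_k(w_k) \in W_0$, whereas the paper first studies the images $f_i(U)$, $f_j(U)$ via the sumset chain $|U| \geq |f_i(U)+f_j(U)| \geq |f_i(U)|$ and translates afterwards; both routes are equally elementary.
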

\begin{proof}  
    Let $f_{-r}, \ldots, f_r$ be the components of $f$ and $f(\xx_{-r}, \ldots, \xx_r) = \sum_{i=-r}^r f_i(\xx_i) + \cc$ for some $\cc \in V$.
    Let us fix $-r \leq i < j \leq r$ such that $f_i$ and $f_j$ are bijections. We put $U_i \coloneqq f_i(U) \subseteq V$ and $U_j \coloneqq f_j(U) \subseteq V$ and we note that $|U_i|=|U_j|=|U|\eqqcolon k.$ We first show that both $U_i$ and $U_j$ are affine subspaces of $V$. We have that: 
    $$k = |U| \geq |f(U, \ldots, U)| = |f_{-r}(U) + \cdots + f_r(U)| \geq |f_i(U)+f_j(U)| = |U_i + U_j| \geq |U_i| = k.$$ 
    Thus, we also have $|U_i+U_j|=k$. Let $\uu_i \in U_i$ and $\uu_j \in U_j$. 
    We put $W_i \coloneqq U_i-\uu_i$ and $W_j \coloneqq U_j-\uu_j$. 
    Then clearly $\vo \in W_i$ and $\vo \in W_j$ and moreover, since translations do not change the set sizes, $|W_i|=|W_j|=|W_i+W_j|=k$. 
    Hence, $W_i+ W_j$ is equal to some set $W$ and $W \supseteq W_i + \vo = W_i$. 
    Since $W_i$ has the same size as $W$, this yields $W_i = W$. Analogously, we have $W_j = W$ and thus $W = W_i = W_j = W_i + W_j = W+W$. This proves that $W$ is closed under addition, which is over $\F_p$ sufficient to make it a vector subspace of $V$. Finally, $U_i = \uu_i+ W$ and $U_j = \uu_j + W$ are both affine subspaces of $V$, one being just a translation of the other.
    
    Now, pick any $\uu \in U$ and put $\vv \coloneqq \left( \sum_{l=-r}^r f_l(\uu) \right) - f_i(\uu) + \uu_i+\cc$. Then:
    $$U \supseteq f(U, \ldots, U) \supseteq f_{-r}(\uu) + \cdots + f_{i-1}(\uu) + U_i + f_{i+1}(\uu) + \cdots + f_{r}(\uu) + \cc = \vv + W.$$ 
    Since $|U|=|W|$, we get $U = \vv + W$.

    It is left to show that the subspace $W$ is invariant under all components of $f$. For any fixed $k$ with $-r \leq k \leq r$, we show that $f_k(W) \leq W$:
    $$\vv + W \supseteq f(\vv+W, \ldots, \vv+W) = f(\vv, \ldots, \vv) + f_{-r}(W) + \cdots + f_r(W) \supseteq f(\vv, \ldots, \vv) + f_k(W).$$
    Since $f_k(W)$ is a subspace, this immediately yields $f(\vv, \ldots, \vv) \in \vv + W$ as well as $f_k(W) \le W$, which concludes the proof.  % Old version: Hence, $f_k(W) \subseteq W$. Thus, $W$ is indeed invariant under all components of $f$. Moreover, this implies that $f(\vv, \ldots, \vv) \in \vv + W$. 
\end{proof}

At this point, it remains to show that the sub-automaton $\AA = (\vv+W, \restr{f}{(\vv+W)^{2r+1}})_r$ from the previous proposition is indeed isomorphic to an affine automaton (whose states must form a vector space, not an affine subspace).
\begin{cor} \label{cor:s(aff)=aff} Let $r \in \N$, $-r \leq i < j \leq r$. It holds that $\SSS(\AFF^{\F_p}_{r;i,j}) = \AFF^{\F_p}_{r;i,j}$.
\end{cor}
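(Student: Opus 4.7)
The plan is to verify $\SSS(\AFF^{\F_p}_{r;i,j}) \subseteq \AFF^{\F_p}_{r;i,j}$; the reverse inclusion is immediate, since every algebra is a subalgebra of itself. So fix $\BB = (V, f)_r \in \AFF^{\F_p}_{r;i,j}$ and a sub-automaton $\AA = (U, \restr{f}{U^{2r+1}})_r$. I invoke Proposition \ref{lemma:affine_subautomata} to obtain the structural description $U = \vv + W$, where $W \leq V$ is a vector subspace invariant under every component $f_k$ and $f(\vv,\ldots,\vv) \in \vv + W$. All further work is bookkeeping on top of this description.

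The next step is to translate to the origin so that the states of $\AA$ genuinely form a vector space. The natural bijection $\varphi \colon W \to \vv + W$, $\xx \mapsto \vv + \xx$, transports the sub-automaton to an isomorphic algebra $(W, g)_r$ with
$$g = \varphi^{-1} \circ \restr{f}{U^{2r+1}} \circ \underbrace{(\varphi \times \cdots \times \varphi)}_{(2r+1)\times}.$$
A short calculation using linearity of the $f_k$ and $f_k$-invariance of $W$ rewrites this as
$$g(\xx_{-r}, \ldots, \xx_r) = \sum_{k=-r}^r (\restr{f_k}{W})(\xx_k) + \dd, \qquad \dd \coloneqq f(\vv,\ldots,\vv) - \vv,$$
where $\dd \in W$ by the hypothesis on $\vv$. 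Hence $(W, g)_r$ is affine over $\F_p$.

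Finally I would check that the permutivity witnesses are inherited. For $k < i$ or $k > j$ the component $f_k$ is identically zero on $V$, so its restriction to $W$ is trivially zero. For $k \in \{i, j\}$ the component $f_k$ is a linear bijection of the finite space $V$ that sends $W$ into $W$ by invariance; injectivity passes to the restriction, and finiteness of $W$ upgrades the injection $\restr{f_k}{W}$ to a bijection $W \to W$. Therefore $(W, g)_r$ is left-permutive witnessed by $i$ and right-permutive witnessed by $j$, so $\AA \cong (W,g)_r \in \AFF^{\F_p}_{r;i,j}$, as required.

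I do not anticipate a genuine obstacle here, because Proposition \ref{lemma:affine_subautomata} has already carried the substantive structural content. The only mildly delicate point is the last one: promoting the obvious injectivity of $\restr{f_k}{W}$ to surjectivity on $W$ for the two critical indices $k = i, j$, which rests precisely on the combination of $W$-invariance, linearity, and finiteness of $V$. Everything else is a routine translate-to-origin verification.
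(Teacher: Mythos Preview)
Your proof is correct and follows essentially the same approach as the paper: invoke Proposition \ref{lemma:affine_subautomata}, translate the affine subspace $\vv+W$ to the origin via the obvious bijection, and observe that the resulting local rule on $W$ is affine with components $\restr{f_k}{W}$ and constant term $f(\vv,\ldots,\vv)-\vv$. Your treatment of the permutivity witnesses is in fact slightly more explicit than the paper's, which simply notes that $g_i$ and $g_j$ are bijections without spelling out the finiteness argument or the vanishing of the outer components.
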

\begin{proof}
    Take a $\BB = (V, f)_r \in \AFF^{\F_p}_{r;i,j}$, and consider any $\AA \in \SSS(\BB)$. From Proposition \ref{lemma:affine_subautomata}, we already know that $\AA$ is of the form $\AA = (\vv+W, \restr{f}{(\vv+W)^{2r+1}})_r$ where $W \leq V$ is a subspace invariant under all components of $f$, $\vv \in V$, and $f(\vv, \ldots, \vv) \in \vv+W$. We need to construct a local algebra $\AA'$ affine over $\F_p$ such that $\AA \cong \AA'$.

    There is a $\ww \in W$ such that $f(\vv, \ldots, \vv) = \vv + \ww$. We define $\AA' = (W, g)_r$ where for any $\xx_{-r}, \ldots, \xx_r \in W$, $g(\xx_{-r}, \ldots, \xx_r) \coloneqq \sum_{i=-r}^r f_i(\xx_i) +\ww$. Since $W$ is invariant under components of $f$, indeed $g: W^{2r+1} \rightarrow W$. Thus, $(W, g)_r$ is affine over $\F_p$ with linear components $g_k = \restr{f_k}{W}$; in particular, $g_i$ and $g_j$ are bijections. Let us define $\varphi: \vv + W \rightarrow W$ by $\varphi(\vv + \xx) = \xx$ for any $\xx \in W$. Then, it is straightforward to verify that $\AA \cong \AA'$ via $\varphi$. Indeed, for any $\xx_{-r}, \ldots, \xx_r \in W$, we have:
    \begin{align*}    
        \varphi\big(f(\vv+\xx_{-r}, \ldots, \vv+\xx_r)\big)&=\varphi \big(f(\vv, \ldots, \vv)+\sum_{i=-r}^r f_i(\xx_i)\big)\\
        &= \varphi\big(\vv + \ww +\sum_{i=-r}^r f_i(\xx_i)\big) = \ww+ \sum_{i=-r}^r f_i(\xx_i) = g(\xx_{-r}, \ldots, \xx_r). \qedhere 
    \end{align*}
    %Take a $\BB = (V, f)_r \in \AFF^{\F_p}_{r;i,j}$, and consider any $\AA \in \SSS(\AFF^{\F_p}_{r;i,j})$. From Lemma \ref{lemma:affine_subautomata}, we already know that $\AA$ is of the form $\AA = (\vv+W, \restr{f}{(\vv+W)^{2r+1}})_r$ where $W \leq V$ is a subspace invariant under all components of $f$, $\vv \in V$, and $f(\vv, \ldots, \vv) \in \vv+W$. Also note that whenever the component $f_k$ is a bijection, then $f_k' = \restr{f_k}{\vv+W}$ is injective, and therefore also a bijection (quite possibly between two different subspaces, but that does not matter).\redfoot{Je ta závorka v pohodě?} We construct a local algebra $\AA'$ affine over $\F_p$ such that $\AA \cong \AA'$.
    %Let $f(\vv, \ldots, \vv) = \vv + \ww$; $\ww \in W$. We define $\AA' = (W, g)_r$ where for any $\xx_{-r}, \ldots, \xx_r \in W$, $g(\xx_{-r}, \ldots, \xx_r) \coloneqq \sum_{i=-r}^r f_i(\xx_i) +\ww$. Since $W$ is invariant under components of $f$, indeed $g: W^{2r+1} \rightarrow W$. Thus, $(W, g)_r$ is affine over $\F_p$ such that $g_i$ and $g_j$ are bijections. Let us define $\varphi: \vv + W \rightarrow W$ by $\varphi(\vv + \xx) = \xx$ for any $\xx \in W$. Then, it is straightforward to verify that $\AA \cong \AA'$ via $\varphi$. Indeed, for any $\xx_{-r}, \ldots, \xx_r \in W$, we have:
    % \begin{align*}    
    %     \varphi\left(f(\vv+\xx_{-r}, \ldots, \vv+\xx_r)\right)&=\varphi(f(\vv, \ldots, \vv)+\sum_{i=-r}^r f_i(\xx_i))\\
    %     &= \varphi(\vv + \ww +\sum_{i=-r}^r f_i(\xx_i)) = \ww+ \sum_{i=-r}^r f_i(\xx_i) = g(\xx_{-r}, \ldots, \xx_r). \qedhere 
    % \end{align*}
\end{proof}

\subsection{Sub-automata of linear CAs}
This section is not part of the proof of Theorem \ref{thm:main_result}; rather, it shows that the analogy for linear automata does not hold, as a sub-automaton of a linear automaton need not be linear. Rather, it turns out that affine automata can be introduced as sub-automata of linear ones.
\begin{prop} \label{prop:s(add)=aff}
    Let $r \in \N$, $-r \leq i < j \leq r$. It holds that $\SSS(\LIN^{\F_p}_{r;i,j}) = \AFF^{\F_p}_{r;i,j}$.
\end{prop}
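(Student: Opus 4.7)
The plan is to establish the two inclusions separately, only one of which is substantive. The inclusion $\SSS(\ADD^{\F_p}_{r;i,j}) \subseteq \AFF^{\F_p}_{r;i,j}$ is immediate: since every additive rule is in particular affine, $\ADD^{\F_p}_{r;i,j} \subseteq \AFF^{\F_p}_{r;i,j}$, and thus Corollary~\ref{cor:s(aff)=aff} gives $\SSS(\ADD^{\F_p}_{r;i,j}) \subseteq \SSS(\AFF^{\F_p}_{r;i,j}) = \AFF^{\F_p}_{r;i,j}$. The substantive task is the reverse inclusion: every affine automaton in $\AFF^{\F_p}_{r;i,j}$ must be realized (up to isomorphism) as a sub-automaton of an additive one in $\ADD^{\F_p}_{r;i,j}$.

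For the non-trivial direction, given $\AA = (V, f)_r$ with $f(\xx_{-r}, \ldots, \xx_r) = \sum_{l=-r}^r f_l(\xx_l) + \cc$, the plan is to enlarge the state space by one scalar coordinate and place a copy of $\AA$ on the affine hyperplane where that coordinate equals $1$. Set $V' = V \oplus \F_p$ and note that $i < j$ forces $r \geq 1$, so some index $k_0 \in \{-r, \ldots, r\} \setminus \{i, j\}$ exists. Define linear components $g_l : V' \to V'$ by
\begin{align*}
g_i(\xx, \lambda) &= (f_i(\xx) + \lambda \cc,\ \lambda),\\
g_j(\xx, \lambda) &= (f_j(\xx),\ \lambda),\\
g_{k_0}(\xx, \lambda) &= (f_{k_0}(\xx),\ -\lambda),\\
g_l(\xx, \lambda) &= (f_l(\xx),\ 0) \quad \text{for } l \notin \{i, j, k_0\},
\end{align*}
and put $g = \sum_l g_l$, so that $\BB = (V', g)_r$ is additive over $\F_p$. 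On inputs whose second coordinates are all equal to $1$, the scalar coordinate of the output is $1 + 1 - 1 = 1$ in $\F_p$, while the $V$-coordinate equals $\sum_l f_l(\xx_l) + \cc = f(\xx_{-r}, \ldots, \xx_r)$. Thus $U = V \times \{1\}$ is closed under $g$, and the injection $\psi : \xx \mapsto (\xx, 1)$ is an algebra homomorphism witnessing $\AA \in \SSS(\BB)$. Both $g_i$ and $g_j$ are bijective because of their upper-triangular block form with $f_i$ (respectively $f_j$) and $1$ on the diagonal, so $\BB \in \ADD^{\F_p}_{r;i,j}$.

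The step I expect to be the main subtlety is the characteristic-two case. The naive attempt would use only the two indices $i$ and $j$, with second-coordinate multipliers $\beta$ and $1 - \beta$; bijectivity of both $g_i$ and $g_j$ then requires $\beta \in \F_p \setminus \{0, 1\}$, which is empty when $p = 2$. Recruiting a third index $k_0$ with coefficient $-1$ circumvents this, since the triple $(1, 1, -1)$ sums to $1$ in every $\F_p$ — and the hypothesis $i < j$ conveniently forces at least three indices in $\{-r, \ldots, r\}$ to be available. Once this is in place, the remaining verifications (linearity of $g$, injectivity and equivariance of $\psi$, and bijectivity of $g_i, g_j$) are routine.
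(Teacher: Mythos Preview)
Your approach is cleaner than the paper's in one respect: by building $g$ to be genuinely linear from the start (rather than affine with an idempotent, then invoking Lemma~\ref{obs:additive_idempotent}), you obtain an additive $\BB$ directly. However, there is a gap. Membership in $\ADD^{\F_p}_{r;i,j}$ requires not only that $g_i$ and $g_j$ be bijective, but also that $g_l$ be the zero map for every $l<i$ and every $l>j$; this is part of left- and right-permutivity in Definition~\ref{def:bijectivity}. Your component $g_{k_0}$ is never zero on the scalar coordinate, so you are forced to take $i < k_0 < j$, which you neither stipulate nor verify.

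When $j \geq i+2$ the repair is trivial: choose $k_0$ strictly between $i$ and $j$, and then for each $l \notin [i,j]$ one has $g_l(\xx,\lambda) = (f_l(\xx),0) = (\vo,0)$ since $f_l = 0$. But when $j = i+1$ no such $k_0$ exists, and over $\F_2$ your two-index fallback also fails, as you yourself observe. In that case one must enlarge the auxiliary space: take $V' = V \oplus \F_2^2$, set $g_i(\xx,\ww) = (f_i(\xx) + L\ww,\, A\ww)$, $g_j(\xx,\ww) = (f_j(\xx),\, B\ww)$, and $g_l(\xx,\ww) = (f_l(\xx), \vo)$ otherwise, where $A,B \in \mathrm{GL}_2(\F_2)$ satisfy $(A+B)\ww_0 = \ww_0$ for some nonzero $\ww_0$ with $L\ww_0 = \cc$; for instance $A = \bigl(\begin{smallmatrix}1&1\\0&1\end{smallmatrix}\bigr)$, $B = \bigl(\begin{smallmatrix}0&1\\1&0\end{smallmatrix}\bigr)$, $\ww_0 = (0,1)$. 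Then $V \times \{\ww_0\}$ is $g$-closed and carries a copy of $\AA$, while $g_i,g_j$ are block-triangular with invertible diagonal blocks, so $\BB \in \ADD^{\F_2}_{r;i,j}$. (For $p>2$ the one-dimensional two-index scheme with scalar multipliers $2$ and $-1$ already suffices.) As an aside, the paper's own construction is not careful about this point either: its components $g_l$ are nonzero at every index $l$, so its $\BB$ lands in $\ADD^{\F_p}_{r;i,j}$ only in the case $i=-r$, $j=r$.
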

\begin{proof}
     Corollary \ref{cor:s(aff)=aff}, together with the fact that linear automata are a subclass of affine ones, yields $\SSS(\LIN^{\F_p}_{r;i,j}) \subseteq \SSS(\AFF^{\F_p}_{r;i,j}) = \AFF^{\F_p}_{r;i,j}$. We complement the results by showing that $\AFF^{\F_p}_{r;i,j} \subseteq \SSS(\LIN^{\F_p}_{r;i,j})$. Let $\A$ be an affine CA with local algebra $\AA = (V, f)_r$, $\AA \in \AFF^{\F_p}_{r;i,j}$. We construct a CA $\B$ with local algebra $\BB \in \LIN^{\F_p}_{r;i,j}$ such that $\AA \in \SSS(\BB)$. As always, write $f:V^{2r+1} \rightarrow V$ as $f(\xx_{-r}, \ldots, \xx_r) = f_{-r}(\xx_{-r}) + \cdots + f_r(\xx_r) + \cc$ where $f_i: V \rightarrow V$ are linear mappings and $\cc \in V$ is a constant. We put $W = V \times \F_p$. We fix a basis $(\vv_1, \ldots, \vv_{k-1})$ of $V$; then, $B \coloneqq (\ww_1, \ldots, \ww_{k-1}, \ww_{k}) \coloneqq \left((\vv_1, 0), \ldots, (\vv_{k-1},0), (\vo, 1)\right)$ is a basis of $W$. We define linear mappings $g_i: W \rightarrow W$ for each $-r \leq i \leq r$ on the basis $B$ as follows:
    \begin{equation*}
    \begin{aligned}[c]
    g_i(\ww_j) &\coloneqq (f_i(\vv_j), 0)\\
    g_i(\ww_k) &\coloneqq
         \begin{cases*}
      (\vo, 1)\\
      (\vo, -1)\\
      (-\cc, 1)
      \end{cases*}
    \end{aligned}
    \qquad \qquad
    \begin{aligned}[c]
    &\text{ for all } -r \leq i \leq r \text{ and } 1 \leq j \leq k-1,\\
    &\text{ for } -r \leq i < 0,\\
    &\text{ for } 0 < i \leq r,\\
    &\text{ for } i=0.
    \end{aligned}
    \end{equation*}
     
     We put $g(\xx_{-r}, \ldots, \xx_{r}) = g_{-r}(\xx_{-r}) + \cdots + g_{r}(\xx_r) + (\cc, 0)$ for any $\xx_{-r}, \ldots, \xx_r \in W$ and define the automaton $\B$ with local algebra $\BB = (W, g)_r$. Clearly $\AA \cong (V\times\{0\}, \restr{g}{(V\times\{0\})^{2r+1}})$, so indeed $\AA \in \SSS(\BB)$. From the construction, it is clear that for each $-r \leq i \leq r$, $g_i$ is bijective whenever $f_i$ is. Furthermore, $g(\ww_k, \ldots, \ww_k) = \underbrace{(\vo, 1) + \cdots + (\vo, 1)}_{r \, \times} + (- \cc, 1) + \underbrace{(\vo, -1) + \cdots + (\vo, -1)}_{r \, \times} + (\cc, 0) = (\vo, 1)=\ww_k$. Thus, $g$ has an idempotent element, and due to Lemma \ref{lemma:linear_idempotent}, $\BB \in \LIN^{\F_p}_{r;i,j}$.
\end{proof}

\chng{
In fact, we have shown $\AFF^{\F}_{r;i,j} \subseteq \SSS(\LIN^{\F}_{r;i,j})$ for every finite field $\F$, since this part of the proof does not use any special property of $\F_p$.
}

\subsection{Quotient automata of affine CAs}
In this section, we study the operator $\HHH$ on the class of affine automata. Again, the bijective condition is required; in Example \ref{ex:counterexample_h} we construct an affine CA violating the bijective condition that contains a quotient automaton that is not affine.

We start by a simple observation: Every invariant subspace gives rise to a congruence and thus, to a quotient automaton. Subsequently, we show the converse: If an affine CA $\BB$ satisfies the bijective condition, then each congruence on $\BB$ is already a congruence on the underlying vector space. To complement the result, in Example \ref{ex:counterexample_h} we construct an affine CA violating the bijective condition with a congruence that is not a congruence on the vector space.

\begin{obs} \label{obs:affine_quotient_automata}
    Let $\B$ be an affine CA over a finite field $\F$ with local algebra $\BB = (V, f)_r$. Let $W \leq V$ be a subspace invariant under all components of $f$. We define $\sim \, \subseteq \, V \times V$ as follows: $\uu \sim \vv$ if and only if $\uu - \vv \in W$. Then, $\sim$ is a congruence on $\BB$.
\end{obs}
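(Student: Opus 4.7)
The plan is a direct verification in two small steps: first that $\sim$ is an equivalence relation, and then that it is compatible with $f$. Since $W$ is a subspace, it is in particular an additive subgroup of $V$, so $\sim$ is the coset equivalence of this subgroup; reflexivity follows from $\vo \in W$, symmetry from closure under additive inverses, and transitivity from closure under addition. No affine structure is needed for this step.

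For the congruence condition, I will exploit the affine form of the local rule. Write $f(\xx_{-r}, \ldots, \xx_r) = \sum_{i=-r}^r f_i(\xx_i) + \cc$ where each $f_i$ is linear and $\cc \in V$. Assume $\uu_i \sim \vv_i$ for all $-r \leq i \leq r$, i.e., $\uu_i - \vv_i \in W$. Computing the difference, the constant $\cc$ cancels:
\[
f(\uu_{-r}, \ldots, \uu_r) - f(\vv_{-r}, \ldots, \vv_r) \;=\; \sum_{i=-r}^r \bigl(f_i(\uu_i) - f_i(\vv_i)\bigr) \;=\; \sum_{i=-r}^r f_i(\uu_i - \vv_i).
\]
By the invariance of $W$ under each $f_i$, every summand $f_i(\uu_i - \vv_i)$ lies in $W$, and since $W$ is closed under addition the whole sum lies in $W$. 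This gives $f(\uu_{-r}, \ldots, \uu_r) \sim f(\vv_{-r}, \ldots, \vv_r)$, which is exactly the congruence condition.

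There is no real obstacle here; the statement is a routine check. The only subtlety worth flagging is that the argument uses \emph{both} features of the hypothesis in an essential way: the linearity of the $f_i$ is what turns the difference $f_i(\uu_i) - f_i(\vv_i)$ into $f_i(\uu_i - \vv_i)$, and the invariance $f_i(W) \le W$ is what then keeps this term inside $W$. The constant term $\cc$ plays no role at all, which explains why affinity (as opposed to additivity) is sufficient.
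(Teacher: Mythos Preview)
Your proof is correct; the paper states this as an observation without proof, and your direct verification (equivalence relation via the subgroup structure of $W$, then compatibility with $f$ via linearity of the components and invariance of $W$) is exactly the routine check the paper leaves implicit.
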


In several steps, we now proceed to show that under the above-mentioned conditions, every quotient automaton is of the form described by Observation \ref{obs:affine_quotient_automata}.

\begin{lemma} \label{lemma:affine_quotient_automata}
    Let $\B$ be a CA with local algebra $\BB = (V, f)_r$ that is affine over a finite field $\F$ and satisfies the bijective condition. Denote the linear components of $f$ by $f_k: V \rightarrow V$; $-r \leq k \leq r$.
    For every congruence $\sim \, \subseteq V \times V$ on $\BB$ and every $\uu, \uu', \vv, \vv' \in V$ it holds that:
    \begin{enumerate}
        \item \label{lq1} If $\uu \sim \uu'$ then $f_k(\uu) \sim f_k(\uu')$ for all $-r \leq k \leq r$.
        \item \label{lq2} Moreover, if $f_k$ is a bijection for some $-r \leq k \leq r$, then $f_k(\uu) \sim f_k(\uu')$ implies $\uu \sim \uu'$.
        \item \label{lq3} If $\uu \sim \uu'$ and $\vv \sim \vv'$ then $\uu + \vv \sim \uu' + \vv'$.
        \item \label{lq4} If $\F=\F_p$ for some prime $p$, then $[\vo]_\sim = \{\xx \in V \mid \xx \sim \vo \}$ is a subspace of $V$ invariant under all components of $f$.
    \end{enumerate}
\end{lemma}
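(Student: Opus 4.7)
The plan is to prove the four parts in sequence, with Part 1 carrying the main technical difficulty: eliminating the constant summand $\cc$ in $f(\xx_{-r}, \ldots, \xx_r) = \sum_l f_l(\xx_l) + \cc$. The bijective condition is precisely what enables this elimination, by supplying a ``spare'' position whose image sweeps out all of $V$ and can therefore cancel $\cc$. Throughout I will fix two distinct indices $i \ne j$ with $f_i$ and $f_j$ bijective, and freely use that each linear $f_l$ satisfies $f_l(\vo) = \vo$.

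For Part 1 and a given $k$, pick any $m \in \{i,j\} \setminus \{k\}$; such $m$ exists because $i \ne j$. Given $\uu \sim \uu'$, apply the congruence property to two inputs that agree everywhere except at position $k$: put $\uu$ versus $\uu'$ at position $k$, a common auxiliary $\ww$ at position $m$, and $\vo$ at every other position. The congruence yields
\[
    f_k(\uu) + f_m(\ww) + \cc \;\sim\; f_k(\uu') + f_m(\ww) + \cc.
\]
As $\ww$ ranges over $V$, bijectivity of $f_m$ makes $f_m(\ww)$ range over $V$; choosing $\ww$ with $f_m(\ww) = -\cc$ collapses both sides and yields $f_k(\uu) \sim f_k(\uu')$.

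Part 1 implies that each $f_k$ descends to a well-defined map $\bar f_k \colon V/{\sim} \to V/{\sim}$ on the finite quotient. For Part 2, when $f_k$ is itself bijective on $V$ the induced $\bar f_k$ is surjective on the finite set $V/{\sim}$, hence bijective; consequently $f_k^{-1}$ of any single $\sim$-class is a single $\sim$-class, which is exactly the desired converse. For Part 3, I first prove a translation-invariance lemma: if $\uu \sim \uu'$, then $\uu + \ww \sim \uu' + \ww$ for every $\ww \in V$. The argument mimics Part 1 -- use Part 2 to lift $\uu, \uu'$ to $f_i^{-1}(\uu) \sim f_i^{-1}(\uu')$, place these at position $i$, a variable at position $j$, and $\vo$ elsewhere, and vary so that $f_j(\cdot) + \cc$ covers $V$. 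Given translation invariance, Part 3 is immediate: Part 2 yields $f_i^{-1}(\uu) \sim f_i^{-1}(\uu')$ and $f_j^{-1}(\vv) \sim f_j^{-1}(\vv')$; placing these at positions $i, j$ with $\vo$ elsewhere gives $\uu + \vv + \cc \sim \uu' + \vv' + \cc$, and translating by $-\cc$ removes the constant.

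Part 4 follows quickly: $\vo \in [\vo]_\sim$ trivially, closure under addition is Part 3 specialised to $\uu' = \vv' = \vo$, closure under $\F_p$-scalars is iterated addition in characteristic $p$, and invariance of $[\vo]_\sim$ under each $f_k$ is Part 1 combined with $f_k(\vo) = \vo$. The whole chain hinges on Part 1, and in particular on the trick of using a bijective ``spare'' component to realise every value of $V$ and thereby cancel the affine constant $\cc$.
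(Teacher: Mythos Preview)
Your argument is correct and follows essentially the same approach as the paper: Part~1 is identical, Part~2 replaces the paper's ``iterate $f_k$ until $f_k^n=\id$'' with the equivalent observation that the induced map on the finite quotient is surjective hence injective, and Part~3 makes explicit the translation-invariance step that the paper carries out in the special form $\uu-\cc\sim\uu'-\cc$ before the two-slot congruence argument. These are cosmetic rearrangements of the same proof.
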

\begin{proof}
Throughout the proof, we write $f(\xx_{-r}, \ldots, \xx_r) = \sum_{i=-r}^r f_i(\xx_i) + \cc$ with components $f_k: V \rightarrow V$, $-r \leq k \leq r$ and with $\cc \in V$. We denote the two bijective components by $f_i$ and $f_j$; note that this is more general than assuming $\BB \in \AFF_{r;i,j}^{\F}$, as we do not require $f_i$ and $f_j$ to be the ``outer'' components.

    \emph{\ref{lq1}.}: Let $-r \leq k \leq r$ such that $k \neq i$. Since $f_i$ is a bijection, there exists $\bb \in V$ such that $f_i(\bb)=-\cc$. Then:
\begin{align*}
     &f_k(\uu) = f_k(\uu) -\cc +\cc 
                            = f(\vo, \ldots, \vo, \uu, \vo, \ldots, \vo, \bb, \vo, \ldots, \vo) \sim\\
    &\sim f(\vo, \ldots, \vo, \uu', \vo, \ldots, \vo, \bb, \vo, \ldots, \vo) 
                            = f_k(\uu') -\cc +\cc
                            = f_k(\uu'), 
\end{align*}
where $\uu, \uu'$ are on the $k$-th position and $\bb$ is on the $i$-th. The proof is analogous if $k=i$ as we can use the bijective component $f_j$.

    \emph{\ref{lq2}.}: If $f_k$ is bijective and $f_k(\uu) \sim f_k(\uu')$, we can repeatedly apply part \ref{lq1}.\ to get $f_k^n(\uu) \sim f_k^n(\uu')$ for every $n \in \N$. Since $f_k: V \rightarrow V$ is an automorphism of a finite-dimensional vector space over a finite field, there exists $n$ such that $f_k^n = \id$. Thus, $\uu = f_k^n(\uu) \sim f_k^n(\uu') = \uu'$.

    \emph{\ref{lq3}.}: Let $\uu \sim \uu'$. We first show that $\uu -\cc \sim \uu' - \cc$. There exist some $\aa, \aa' \in V$ such that $f_i(\aa)=\uu$ and $f_i(\aa')=\uu'$ and from \emph{\ref{lq2}.}, it holds that $\aa \sim \aa'$. There also exists a $\bb \in V$ such that $f_j(\bb) = -\cc-\cc$. Then: 
    \begin{align*}
        &\uu - \cc = f_i(\aa) + f_j(\bb) + \cc = f(\vo, \ldots, \vo, \aa, \vo, \ldots, \vo, \bb, \vo, \ldots, \vo) \sim\\
        &\sim f(\vo, \ldots, \vo, \aa', \vo, \ldots, \vo, \bb, \vo, \ldots, \vo) = f_i(\aa') + f_j(\bb) + \cc = \uu' - \cc,
    \end{align*}
    where $\aa, \aa'$ are on the $i$-th position and $\bb$ is on the $j$-th. Next, let $\vv \sim \vv'$ and let $\yy, \yy' \in V$ be such that $f_j(\yy)=\vv$ and $f_j(\yy')=\vv'$. Since $\uu - \cc \sim \uu' -\cc$, we find $\xx, \xx' \in V$ such that $f_i(\xx)=\uu-\cc$, $f_i(\xx')=\uu'-\cc$, and from \emph{\ref{lq2}.}, we again have $\xx \sim \xx'$. Then:
    \begin{align*}
        &\uu + \vv = (\uu-\cc) + \vv + \cc = f_i(\xx) + f_j(\yy) + \cc =  f(\vo, \ldots, \vo, \xx, \vo, \ldots, \vo, \yy, \vo, \ldots, \vo) \sim\\
        &\sim f(\vo, \ldots, \vo, \xx', \vo, \ldots, \vo, \yy', \vo, \ldots, \vo) = f_i(\xx') + f_j(\yy') +\cc = (\uu' -\cc) +\vv' +\cc = \uu' + \vv',
    \end{align*}
    where $\xx, \xx'$ are on the $i$-th position and $\yy, \yy'$ are on the $j$-th.

    \emph{\ref{lq4}.}: By \emph{\ref{lq3}.}, the congruence $\sim$ ``respects addition'' in $V$. If $\F=\F_p$, then multiplication of vectors from $V$ by scalars from $\F$ is generated by addition in $V$, and therefore $\sim$ also ``respects multiplication by scalars''. Hence, $\sim$ is not only a congruence on the algebra but also a congruence on the vector space $V$, and thus $[\vo]_\sim \leq V$. By \emph{\ref{lq1}.}, the subspace $[\vo]_\sim$ is invariant under all components of $f$. 
\end{proof}

We now need to exploit the information about $[\vo]_\sim$ to infer knowledge about the quotient space.

\begin{cor}  \label{cor:h(aff)=aff}       
        Let $r \in \N$ and $-r \leq i < j \leq r$. Then $\HHH(\AFF^{\F_p}_{r;i,j}) = \AFF^{\F_p}_{r;i,j}$. Moreover, $\HHH(\LIN^{\F_p}_{r;i,j}) = \LIN^{\F_p}_{r;i,j}$.
\end{cor}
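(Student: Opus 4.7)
The plan is to combine Lemma~\ref{lemma:affine_quotient_automata} with the standard quotient construction from linear algebra. One inclusion, $\AFF^{\F_p}_{r;i,j} \subseteq \HHH(\AFF^{\F_p}_{r;i,j})$ (and the analogue for $\ADD$), is immediate, since every algebra is its own quotient under the identity congruence. The content is therefore in the reverse inclusion.

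Take $\AA \in \HHH(\BB)$ with $\BB = (V, f)_r \in \AFF^{\F_p}_{r;i,j}$, and write $\AA \cong \bigslant{\BB}{\sim}$ for some congruence $\sim$ on $\BB$. By parts \ref{lq3} and \ref{lq4} of Lemma~\ref{lemma:affine_quotient_automata}, $W \coloneqq [\vo]_\sim$ is an $\F_p$-subspace of $V$ invariant under every component $f_k$, and $\sim$ respects addition; together these give that $\uu \sim \vv$ iff $\uu - \vv \in W$. Hence the underlying set of $\bigslant{\BB}{\sim}$ is canonically the quotient vector space $V/W$. I would then define $\overline{f_k}(\xx + W) \coloneqq f_k(\xx) + W$; invariance $f_k(W) \subseteq W$ makes each $\overline{f_k}$ a well-defined $\F_p$-linear map on $V/W$, and $\bigslant{\BB}{\sim}$ becomes (isomorphic to) the affine CA with rule $\overline{f}(\xx_{-r}+W, \ldots, \xx_r+W) = \sum_k \overline{f_k}(\xx_k + W) + (\cc + W)$.

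It remains to verify that the bijective outer components carry over to the quotient. Since $V$ is finite and $f_i$ is a bijection with $f_i(W) \subseteq W$, injectivity of $f_i$ on the finite set $W$ forces $f_i(W) = W$, and then $\overline{f_i}$ is a surjection, hence a bijection, of the finite space $V/W$; likewise for $\overline{f_j}$. For $k < i$ we have $f_k \equiv \vo$, so $\overline{f_k} \equiv \vo$, and analogously for $k > j$. Thus $\bigslant{\BB}{\sim} \in \AFF^{\F_p}_{r;i,j}$, as required. The additive case follows at once: if $\cc = \vo$ in $\BB$, then the constant of $\bigslant{\BB}{\sim}$ is $\vo + W$, so $\bigslant{\BB}{\sim} \in \ADD^{\F_p}_{r;i,j}$.

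The only real obstacle is the passage from an abstract CA-congruence to a vector-space congruence, which Lemma~\ref{lemma:affine_quotient_automata} has already settled; after that, the proof is essentially a routine check that the classical quotient of a vector space by an invariant subspace commutes with the CA local rule and preserves both $\F_p$-linearity and bijectivity of the designated components.
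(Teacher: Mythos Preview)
Your proof is correct and follows essentially the same route as the paper: both use Lemma~\ref{lemma:affine_quotient_automata} to identify the congruence $\sim$ with the coset decomposition $V/W$ for $W=[\vo]_\sim$, then perform the routine quotient construction and check that bijectivity of the outer components survives. Your write-up is slightly more explicit in verifying left- and right-permutivity (the vanishing of $\overline{f_k}$ for $k<i$ and $k>j$) and in handling the trivial inclusion, while the paper phrases the additive case via the idempotent $[\vo]_\sim$ rather than via $\cc=\vo$; these are cosmetic differences only.
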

\begin{proof}
    Let $\B$ be an affine CA with local algebra $\BB = (V, f)_r \in \AFF^{\F_p}_{r;i,j}$, and consider a congruence $\sim$ on $\BB$. Let $\AA = \bigslant{\BB}{\sim} = (\bigslant{V}{\sim}, h)_r$ be the quotient algebra of $\BB$. By Lemma \ref{lemma:affine_quotient_automata}, part \emph{\ref{lq4}.}, $\sim$ is a congruence on the vector space $V$ and thus, $\bigslant{V}{\sim}$ is again a vector space over $\F_p$. Let $W = [\vo]_\sim$. Then $\bigslant{V}{\sim} = \{\xx + W \mid \xx \in V\}$. For each $-r \leq l \leq r$ we define $h_l(\xx+W) \coloneqq f_l(\xx)+W$ for any $\xx \in V$. Thanks to Lemma \ref{lemma:affine_quotient_automata}, \emph{\ref{lq1}.}, $h_l: \bigslant{V}{\sim} \rightarrow \bigslant{V}{\sim}$ is a well-defined mapping. Clearly, it is a linear mapping on $\bigslant{V}{\sim}$. By definition, $h(\xx_{-r}+W, \ldots, \xx_r+W) = f(\xx_{-r}, \ldots, \xx_r) + W = \sum_{l=-r}^r f_l(\xx_l) + \cc + W$. This is further equal to $\sum_{l=-r}^r (f_l(\xx_l)+W) + (\cc + W) = \sum_{l=-r}^r h_l(\xx_l + W) + (\cc + W)$. This shows, as expected, that $h$ is indeed an affine mapping with components $h_l$, $-r \leq l \leq r$. Clearly, if $f_l$ is a bijection for some $-r \leq l \leq r$, then $h_l$ is surjective and therefore also a bijection. Thus, $\AA = \bigslant{\BB}{\sim} = (\bigslant{V}{\sim}, h)_r$ is a CA affine over $\F_p$ whose local rule has its $i$-th and $j$-th component bijective. This finishes the proof in the affine case.

    To conclude the second part of the statement: If $\BB \in \LIN^{\F_p}_{r;i,j}$, then $\vo \in V$ is an idempotent of $f$, so it is easy to see that $[\vo]_\sim$ is an idempotent for $\AA = \bigslant{\BB}{\sim}$. Hence, \chng{by Lemma \ref{lemma:linear_idempotent},} $\AA \in \LIN^{\F_p}_{r;i,j}$.
\end{proof}

\subsection{Main Result and Examples}

Combining Corollaries \ref{cor:xi(aff)=aff}, \ref{cor:pfin(aff)=aff}, \ref{cor:s(aff)=aff}, \ref{cor:h(aff)=aff}, and Proposition \ref{prop:s(add)=aff} yields the following main result. Note that the announced Theorem \ref{thm:main_result} is just another formulation of (\ref{thm:res1}).

\begin{thm} \label{thm:full_main_result}
    For any $r \in \N$, $-r \leq i < j \leq r$, and any prime $p$ it holds that  
    \begin{align}
    \HHH\SSS\Pfin \Xi(\AFF^{\F_p}_{r;i,j})&=\AFF^{\F_p}_{r;i,j}, \label{thm:res1}\\
    \HHH\SSS\Pfin \Xi(\LIN^{\F_p}_{r;i,j})&=\AFF^{\F_p}_{r;i,j}. \label{thm:res2}
\end{align}
\end{thm}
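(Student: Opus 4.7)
The plan is to assemble Theorem \ref{thm:full_main_result} as a direct consequence of the preceding corollaries and Proposition \ref{prop:s(add)=aff}. Since the composition $\HHH\SSS\Pfin\Xi$ acts from the innermost operator outward, I would simply evaluate it step by step, at each stage replacing the inner class by the one computed in the corresponding earlier result.

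For (\ref{thm:res1}), the evaluation is
\[
\HHH\SSS\Pfin\Xi(\AFF^{\F_p}_{r;i,j}) = \HHH\SSS\Pfin(\AFF^{\F_p}_{r;i,j}) = \HHH\SSS(\AFF^{\F_p}_{r;i,j}) = \HHH(\AFF^{\F_p}_{r;i,j}) = \AFF^{\F_p}_{r;i,j},
\]
the four equalities being Corollaries \ref{cor:xi(aff)=aff}, \ref{cor:pfin(aff)=aff}, \ref{cor:s(aff)=aff}, and \ref{cor:h(aff)=aff} respectively. For (\ref{thm:res2}), the analogous chain is
\[
\HHH\SSS\Pfin\Xi(\ADD^{\F_p}_{r;i,j}) = \HHH\SSS\Pfin(\ADD^{\F_p}_{r;i,j}) = \HHH\SSS(\ADD^{\F_p}_{r;i,j}) = \HHH(\AFF^{\F_p}_{r;i,j}) = \AFF^{\F_p}_{r;i,j}.
\]
The only step where the class genuinely enlarges is the third one, which uses Proposition \ref{prop:s(add)=aff} to pass from additive to affine; the final step then closes the evaluation by invoking Corollary \ref{cor:h(aff)=aff} once more. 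Both set equalities (not just inclusions) are already built into the invoked results, so no separate verification of a reverse inclusion is necessary.

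No real obstacle remains at this point: the substantive work has been carried out earlier in the section. In particular, the classification of sub-automata of bijective affine CAs via invariant affine subspaces (Proposition \ref{lemma:affine_subautomata}) and of quotient automata via $\F_p$-invariant subspaces (Lemma \ref{lemma:affine_quotient_automata}) feed into Corollaries \ref{cor:s(aff)=aff} and \ref{cor:h(aff)=aff}; the stability of the affine and additive classes under $\Xi$ and $\Pfin$ was verified directly; and Proposition \ref{prop:s(add)=aff} supplies the embedding of each affine CA into an additive one by appending a single coordinate that absorbs the constant term. The remaining task is purely bookkeeping, and I would conclude by remarking that Theorem \ref{thm:main_result} in its original formulation is nothing other than (\ref{thm:res1}) unpacked via Definition \ref{def:ca_simulation}, since $\AA \preceq \BB$ is defined to mean precisely $\AA \in \HHH\SSS\Pfin\Xi(\BB)$.
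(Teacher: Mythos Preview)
Your proposal is correct and matches the paper's own proof, which consists of a single sentence invoking precisely Corollaries \ref{cor:xi(aff)=aff}, \ref{cor:pfin(aff)=aff}, \ref{cor:s(aff)=aff}, \ref{cor:h(aff)=aff}, and Proposition \ref{prop:s(add)=aff}. You have simply spelled out the two chains of equalities explicitly, which is a faithful (and slightly more detailed) rendering of the same argument.
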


\chng{
In what follows, we provide a few examples of affine automata not satisfying the assumptions of Theorem \ref{thm:full_main_result} for which the result does not hold. While we defined linear and affine automata over general finite fields, the first example illustrates that our results only work over the fields $\F_p$, $p$ prime. Over them, the multiplication can be recovered by repeated addition, which is why such linear CAs are sometimes also called \emph{additive}.

\begin{ex}[non-affine sub-automaton when the field is not prime] \label{ex:counterexample_finite_field}
Let $\F_{p^n}$ be the field with $p^n$ elements, $n>1$. Consider the local algebra $\BB=(\F_{p^n}, f)_r$ of a CA linear over $\F_{p^n}$ where
$$
f(\xx_{-r}, \ldots, \xx_r) = a_{-r}\xx_{-r} + \cdots + a_r\xx_r
$$
with the constants $a_i$ belonging to the prime subfield $\F_p$. Assume further that at least two of them are non-zero.

Since $\F_{p^n}$ is an $\F_p$-vector space, we know that $\BB$ is affine over $\F_p$ as well. Thus, we can use Proposition \ref{lemma:affine_subautomata} to get a full characterisation of sub-automata of $\BB$: They correspond to (shifted) $\F_p$-subspaces of the $n$-dimensional space $\F_{p^n}$. Specifically, the size of the sub-automata can only be of the form $p^k$, $0 \leq k \leq n$. Hence, the only two sub-automata of $\BB$ that are affine over $\F_{p^n}$ are the trivial ones: $\BB$ and the automaton with a single element. On the other hand, all of them are $\F_p$-affine.
\end{ex}

When analysing the sub-automata and quotient automata of affine CAs, we further relied on the assumption that the CA has two of its components bijective. We now give simple counterexamples of affine CAs violating the bijective condition that contain a sub-automaton or a quotient automaton that is not affine. 

While in Example \ref{ex:counterexample_finite_field}, the argument relied simply on the cardinality of the sub-automata, the following examples are more elaborate. Namely, we show the existence of a CA affine over $\F_p$ violating the bijective condition whose sub-automaton (Example \ref{ex:counterexample_s}) or quotient automaton (Example \ref{ex:counterexample_h}) is not affine over any finite field. In fact, the examples already imply a much stronger result: the state space of the sub-automaton and quotient automaton cannot even be equipped with the structure of an abelian group to create an \emph{affine abelian CA}, discussed in detail in Subsection \ref{subsection:generalizations}.

\begin{ex}[non-affine sub-automaton when bijective condition is violated] \label{ex:counterexample_s}
    Let $V$ be a $d$-dimensional vector space over $\F_p$ and $\BB=(V, f)_r$ be a local algebra of a CA linear over $\F_p$ where $f$ has components $f_i:V \rightarrow V$, $-r \leq i \leq r$. Suppose that $f_0$ is bijective and that there exists a subset $S \subseteq V$ invariant under $f_0$ such that $f_k(S)=\{\vo \}$ for all $k \neq 0$. Let $\AA = (S, \restr{f}{S^{2r+1}})_r$. Clearly, $\AA$ is a sub-automaton of $\BB$ which is insensitive to all coordinates except for $0$.
    
    If $|S|$ is divisible by two different primes, we can quickly conclude that $\AA$ is not affine over any finite field. This is already a simple counterexample which illustrates the importance of the bijective condition. Below, we show that even if $S$ has an ``admissible'' cardinality, $\AA$ may not be affine over any finite field. Suppose that:
    \begin{itemize}
        \item There exist $\ww_1, \ww_2 \in S$, $\ww_1 \neq \ww_2$ such that $f_0(\ww_1)=\ww_1$ and $f_0(\ww_2)=\ww_2$.
        \item There exists $\ww' \in S$ such that $f_0(\ww')\neq\ww'$.
        \item $|S|=q$ for some prime $q$.
    \end{itemize}
    Clearly, if $\AA$ was affine, it would have to hold that $\AA \in \AFF_r^{\F_{q}}$. We will show this is not true. For contradiction, let us assume that there exists an isomorphism $\varphi: \AA \to (\F_q, g)_r$ where $g(\xx_{-r}, \ldots, \xx_r)=a\xx_0 + b$ for some $a, b \in \F_q$. Since $\ww_1$ and $\ww_2$ are fixed points of $f_0$, it holds that $a\varphi(\ww_1)+b=\varphi(\ww_1)$ and $a\varphi(\ww_2)+b=\varphi(\ww_2)$. This already implies that $a=1$ and $b=0$. However, we also have $a\varphi(\ww')+b\neq\varphi(\ww')$ which is a contradiction.
\end{ex}

\begin{ex}[non-affine quotient automaton when bijective condition is violated] \label{ex:counterexample_h}
    Let $V$ be a $d$-dimensional vector space over $\F_p$, $d \geq 3$, and $\BB=(V, f)_r$ be a local algebra of a CA linear over $\F_p$ where $f$ has components $f_i:V \rightarrow V$, $-r \leq i \leq r$. Suppose $f_0$ is bijective and further assume that:
    \begin{itemize}
        \item $f_k$ is the $\vo$ mapping for all $k \neq 0$.
        \item There exist $\uu, \vv \in V$, $\uu \neq \vv$, such that $f_0(\uu)=\vv$ and $f_0(\vv)=\uu$.
    \end{itemize}
Note that the second assumption already implies $f_0(\uu+\vv)=\uu+\vv$. And further, since $d \geq 3$, that there exists $\uu' \in V \setminus \{\uu, \vv \}$ which is not a fixed point of $f_0$.   
We define $\sim \, \coloneqq \{(\xx, \xx) \mid \xx \in V \} \cup \{(\uu, \vv),  (\vv, \uu) \}$. It is easy to verify that $\sim$ is a congruence on $\BB$ but is not a congruence on the vector space $V$. Let $\AA \coloneqq \bigslant{\BB}{\sim}$. Again, since $|\AA|=p^d-1$ cannot be a power of $p$, we can quickly conclude that $\AA$ is not affine over $\F_p$. Further, if $|\AA|$ is divisible by two different primes, we can again quickly conclude that $\AA$ is not affine over any finite field. In what follows, we show a stronger result: even when $|\AA|=q$ for some prime $q$ (which actually means that $p=2$ and $q$ is a Mersenne prime), we prove that $\AA$ is not affine over $\F_q$.

For contradiction, let us assume that $\varphi: \AA \to (\F_q, g)_r$ is an isomorphism where $g(\xx_{-r}, \ldots, \xx_r)=a\xx_0 + b$ for some $a, b \in \F_q$. Clearly, $a\varphi([\uu]_\sim)+b=[\vv]_\sim=[\uu]_\sim$ and $a\varphi([\vo]_\sim)+b=[\vo]_\sim$. This already implies $a=1$ and $b=0$. However, we also have that $a\varphi([\uu']_\sim)+b\neq[\uu']_\sim$ which yields a contradiction.
\end{ex}
}

\begin{comment}
\begin{ex} \label{ex:counterexample_h_subspace}
    Let $V$ be a finite-dimensional vector space over $\F_2$ and let $\B$ be an linear CA over $\F_2$ with local algebra $\BB=(V, f)_r$ where $f$ has components $f_i:V \rightarrow V$, $-r \leq i \leq r$. Suppose that for some $-r \leq i \leq r$ the linear map $f_i \neq \id$ is a bijection and all the other components are constant zero mappings. We define a relation $\sim \, \subseteq V \times V$ such that $\uu \sim \vv$ if and only if there exists some $k \geq 0$ such that $f_i^k(\uu)=\vv$. Since $f_i$ is bijective on a finite set $V$, $\sim$ is symmetrical and therefore an equivalence. Further, it is straightforward to verify that $\sim$ is a congruence on $\BB$. Then $[\vo]_\sim = \{ \vo \}$, but there exists a $\vv \in V$ such that $|[\vv]_\sim|>1$. Thus, $\sim$ is not a congruence of the vector space $V$.
\end{ex}
\end{comment}

\chng{
The last example illustrates an application of our theory: For a well-known elementary CA, we characterize all elementary CAs simulated by it.

\begin{ex}[ECA 60] \label{eca_60}
    Let us consider the class of elementary CAs; i.e., CAs with states $\F_2$ and radius $r=1$. We consider elementary CA 60 that is defined as $\ECA_{60}=(\F_2, f)_1$ where $f(x,y,z)=x+y \bmod 2$ and ECA 195 defined as $\ECA_{195}=(\F_2, g)_1$ where $g(x,y,z)=x+y+1 \bmod 2$. Clearly, $\ECA_{195} \cong \ECA_{60}$ via the bijection that exchanges 0 and 1. Further, $\ECA_{60}, \ECA_{195} \in \AFF^{\F_2}_{1; -1,0}$ and they are the only two elementary CAs that belong to this class. Thus, Theorem \ref{thm:full_main_result} implies that the only elementary CA that can be simulated by ECA 60 is itself (up to isomorphism).
\end{ex}
}

\paragraph{Canonical Affine Cellular Automata} Canonical affine CAs form the most studied subclass of affine automata. Our results can be used to analyse $\HHH\SSS \Pfin \Xi (\AA)$ in all non-trivial cases, as explained below. Suppose $\A$ is affine over $\F_p$ with local algebra $\AA = (\F_p, f)_r$ and has its local rule of the form $f(x_{-r}, \ldots, x_r) = a_{-r} x_{-r} + \cdots + a_r x_{r} + c$ for some coefficients $a_{-r}, \ldots, a_r \in \F_p$ and $c \in \F_p$. Then, we can distinguish the following cases:
\begin{enumerate}
    \item \label{can_add1} $a_i=0$ for all $-r \leq i \leq r$. In such a case, the CA is a constant zero mapping and studying $\HHH\SSS\Pfin \Xi (\AA)$ is trivial. Concretely, $$\HHH\SSS\Pfin \Xi (\AA) = \{(S,g)_r \mid S \text{ an arbitrary finite set and } g \text{ a constant mapping} \}.$$
    \item  \label{can_add2} $a_i\neq0$ for some $-r \leq i \leq r$ and $a_j=0$ for all $j \neq i$. Then, the CA is essentially a ``shift operator'' and again, studying $\HHH\SSS\Pfin \Xi (\AA)$ is simple.
    \item \label{can_add3} There are two non-zero coefficients $a_i$ and $a_j$, $i < j$ (we can take $i$ to be the smallest such coefficient and $j$ the largest one). Hence, $\AA \in \AFF_{r; i,j}^\F$ and Theorem \ref{thm:full_main_result} applies.
\end{enumerate}

Note that in the simple cases \ref{can_add1}.\ and \ref{can_add2}., the automata in $\HHH\SSS\Pfin \Xi (\AA)$ are not all affine. 

\chng{
\subsection{Generalizations} \label{subsection:generalizations}
In a closely related paper \cite{additive_cas_over_Zp_bottom}, the authors showed that certain automata are limited in terms of what they can simulate. All such automata can be interpreted as affine over $\F_p$, $p$ prime (as a matter of fact, they are canonical linear). Thus, studying the classes $\AFF^{\F_p}_{r;i,j}$, $-r \leq i < j \leq r$, seems like a natural generalization; this is one of the main reasons we focused on such automata in this paper.

However, it is relevant to consider other related classes of automata, such as abelian CAs or CAs affine over arbitrary finite fields. Below, we introduce abelian CAs and discuss how the results from this paper can be directly used to infer simulation limitations of such general CA classes.

\paragraph{CAs affine over arbitrary finite fields}
Let $\F$ be an arbitrary finite field. Then $\F = \F_{p^n}$ for some prime $p$ and $n \in \N$. We have already showed in Example \ref{ex:counterexample_finite_field} that if $n>1$ then CAs simulated by automata affine over $\F$ need not be affine over $\F$. However, it is an easy observation that $\F_{p^n}$ is a vector space over $\F_p$ and thus $\AFF_r^{\F_{p^n}} \subseteq \AFF_r^{\F_{p}}$ (and analogously $\LIN_r^{\F_{p^n}} \subseteq \LIN_r^{\F_{p}}$). This immediately yields the following result:
\begin{thm} \label{thm:full_main_result_arbitrary_field}
    For any $r \in \N$, $-r \leq i < j \leq r$, any prime $p$ and any $n \in \N$ it holds that:  
    \begin{align*}
    \HHH\SSS\Pfin \Xi(\AFF^{\F_{p^n}}_{r;i,j})&\subseteq \AFF^{\F_{p}}_{r;i,j},\\
    \HHH\SSS\Pfin \Xi(\LIN^{\F_{p^n}}_{r;i,j})&\subseteq \AFF^{\F_{p}}_{r;i,j}.
\end{align*}
\end{thm}
%To complete the picture, recall Example \ref{ex:counterexample_finite_field} of a CA in $\AFF^{\F_{p^n}}_{r}$ satisfying the bijective condition but containing a sub-automaton which is not affine over $\F_{p^n}$.

\paragraph{Abelian CAs}
One may notice that to obtain our main result in Theorem \ref{thm:full_main_result}, we only used the additive structure of vector spaces over $\F_{p}$, since multiplication by scalars from $\F_p$ can be obtained by repeated addition. This motivates the following definition:

\begin{defn} [affine abelian CA, abelian CA]
Let $(G, +, -, 0)$ be a finite abelian group and let $\A = (G^\Z, F)$ be a CA with local algebra $(G, f)_r$. We say that $\AA$ (or $\A$) is \emph{abelian} if $f: G^{2r+1} \rightarrow G$ is a homomorphism between the abelian groups. In such a case, we can write $f$ in the following form:
\begin{align*} %\label{def:abelian_f}
    f(\xx_{-r}, \ldots, \xx_r) = f_{-r}(\xx_{-r}) + \cdots + f_r(\xx_r),
\end{align*}
where $f_i: G \rightarrow G$ is a group endomorphism for each $-r \leq i \leq r$. The class of all local algebras isomorphic to some abelian local algebra of a CA with radius $r$ is denoted as $\AB_r$.

In the more general case when $f$ has the form
\begin{align*} %\label{def:affine_abelian_f}
    f(\xx_{-r}, \ldots, \xx_r) = f_{-r}(\xx_{-r}) + \cdots + f_r(\xx_r) + \cc,
\end{align*}
with $f_i: G \rightarrow G$ a group endomorphism for each $-r \leq i \leq r$ and $\cc \in G$ a constant, we say $\AA$ (or $\A$) is \emph{affine abelian}. We denote the class of all local algebras isomorphic to some affine abelian local algebra of a CA with radius $r$ as $\AFAB_r$.

For $-r \leq i < j \leq r$ we again define classes $\AB_{r;i,j}$ and $\AFAB_{r;i,j}$ that contain only $(i,j)$-bi-permutive algebras.
\end{defn}

It is easy to see that the notion of an abelian CA generalizes linearity over finite fields; indeed, for any finite field $\F$ it holds that:
\begin{align*}
    \AFF^{\F}_{r}&\subseteq \AFAB_{r}, \\
    \LIN^{\F}_{r}&\subseteq \AB_{r}.  
\end{align*}

An interested reader can go through the proofs of Lemma \ref{lemma:ksi_affinity}, Proposition \ref{lemma:affine_subautomata}, Corollary \ref{cor:s(aff)=aff}, Proposition \ref{prop:s(add)=aff}, Lemma \ref{lemma:affine_quotient_automata}, and Corollary \ref{cor:h(aff)=aff} to see that they strightforwardly generalize to the case of abelian and affine abelian CAs. Thus, this immediately yields the following general result.

\begin{thm} \label{thm:full_main_abelian_result}
    For any $r \in \N$, $-r \leq i < j \leq r$, it holds that  
    \begin{align*}
    \HHH\SSS\Pfin \Xi(\AFAB_{r;i,j})&=\AFAB_{r;i,j}, \\
    \HHH\SSS\Pfin \Xi(\AB_{r;i,j})&=\AFAB_{r;i,j}. 
\end{align*}
\end{thm}

Compared to the main Theorem \ref{thm:full_main_result}, this result shows limitations of the simulation capacity of a broader class of automata. On the other hand, the limitation is less strict -- thus, for CAs affine over a finite field, it is better to apply Theorem \ref{thm:full_main_result}.

To complement Theorem \ref{thm:full_main_abelian_result}, recall Examples \ref{ex:counterexample_s} and \ref{ex:counterexample_h}: Clearly, any affine abelian CA with $q$ elements, $q$ prime, is in fact affine over $\F_q$. Thus, these two examples actually present affine abelian CAs violating the bijective condition with a sub-automaton or quotient automaton that is not affine abelian. 
}
\section{Concluding Remarks} \label{section:conclusion} 
We stress that an important part of the merit of this paper lies in formalizing the notion of CA simulation into algebraic language. This makes it possible to see new connections to well established fields of abstract algebra. Whereas the proofs provided in this paper do not rely on any sophisticated algebraic concepts, we remark that, as an example, Lemma \ref{lemma:affine_quotient_automata} and Corollary \ref{cor:h(aff)=aff} are a direct consequence of a deeper theorem by Smith \cite{malcev_varieties} and Gumm \cite{algebras_in_congruence_permutable_varieties} about abelian algebras with a Maltsev term.

We believe that the connection with abstract algebra can provide powerful tools for deriving a plethora of both negative and positive results regarding the simulation capacity of various CA classes in the future.

\section{Acknowledgements} We thank Jiří Tůma for supervising the whole process of this paper's creation as well as David Stanovský for his valuable insights. \chng{We further thank the reviewers for their insightful comments that helped us present our work in a broader context.}

Our work was supported by the Czech project AI$\&$Reasoning CZ.02.1.01/0.0/0.0/15\texttt{\char`_}003/0000466 and the European Regional Development Fund and by SVV-2020-260589.

\printbibliography

\end{document}